\newtheorem*{rep@theorem}{\rep@title}
\newcommand{\newreptheorem}[2]{%
\newenvironment{rep#1}[1]{%
 \def\rep@title{#2 \ref{##1}}%
 \begin{rep@theorem}}%
 {\end{rep@theorem}}}
\theoremstyle{plain}
\newtheorem{thm}{Theorem}[section]
\newtheorem{lem}[thm]{Lemma}
\newtheorem{prop}[thm]{Proposition}
\newtheorem{cor}[thm]{Corollary}
\theoremstyle{definition}
\newtheorem{defn}[thm]{Definition}
\newtheorem{exmp}[thm]{Example}
\theoremstyle{remark}
\newtheorem{rem}[thm]{Remark}
\newcommand*\samethanks[1][\value{footnote}]{\footnotemark[#1]}
\newcommand{\N}{\mathbb{N}}
\newcommand{\R}{\mathbb{R}}
\newcommand{\eps}{\varepsilon}
\renewcommand{\pmod}{\mathbf{PersMod}}
\newcommand{\id}{\textnormal{id}}
\newcommand{\diam}{\textnormal{diam}}
\newcommand{\Dgm}{\textnormal{Dgm}}
\newcommand{\VR}{\textnormal{VR}}
\pgfplotsset{compat=1.6}
\tikzstyle{vertex} = [circle, draw = black, fill = black, minimum size = 3mm]
\title{Bottleneck Profiles and Discrete Prokhorov Metrics for Persistence Diagrams}
\author{Pawe{\l} D{\l}otko\thanks{Mathematical Institute, Polish Academy of Sciences, Warsaw} \and Niklas Hellmer\samethanks
    }
\begin{document}
\setlength{\parindent}{0pt}
\maketitle

\begin{abstract}
    In topological data analysis (TDA), persistence diagrams have been a successful tool.
    To compare them, Wasserstein and Bottleneck distances are commonly used.
    We address the shortcomings of these metrics and show a way to investigate them in a systematic way by introducing bottleneck profiles.
    This leads to a notion of discrete Prokhorov metrics for persistence diagrams as a generalization of the Bottleneck distance.
    These metrics satisfy a stability result and can be used to bound Wasserstein metrics from above and from below.
    We provide algorithms to compute the newly introduced quantities and end with an discussion about experiments.
\end{abstract}


\section{Introduction}
The field of topological data analysis (TDA) is becoming a popular tool to study the structure of complex data.
One of the major tools of TDA is persistent homology (PH) \cite{EdelsbrunnerHarer2010CompTop}.
Its pipeline takes a (often highly complex) point cloud in Euclidean space as input and produces a point cloud in the plane, the persistence diagram (PD), as output.
Intuitively, persistence diagrams serve as a summary of the shape of the input data.
As a consequence, one can compare different shapes indirectly, by comparing their PDs.
The need for a robust and computationally efficient notion of distance for PDs arises.
Classically, one uses the Bottleneck and Wasserstein distances to this end \cite{Kerber2017Geometry}.
However, the Bottleneck distance only picks up the single biggest difference and the Wasserstein distance is prone to noise, as it picks up every difference no matter how small.

This fact motivates our work to search for new metrics.
Starting from the investigation of bottlenecks, we introduce the notion of the bottleneck profile of two PDs, which is a map $\R_{\ge 0} \to \N \cup\{\infty\}$ (Definition \ref{defn:DfunctionPD}).
This tool summarizes metric information at varying scales and generalizes the Bottleneck distance.
Also the Wasserstein distance can be, in special cases, computed from the bottleneck profile; in general, it can be bounded given a bottleneck profile.

The bottleneck profiles arises naturally in a discrete version of the Prokhorov distance, which is a classical tool in probability theory.
It turns out that the Bottleneck and the Prokhorov distance are just two instances of a whole family of Prokhorov-style metrics discussed in this paper (Definition \ref{defn:discreteProkhorov}).
This family is parameterised by subclass of functions $f\colon [0,\infty) \to [0,\infty)$.
Not every function $f$ gives in fact rise to a genuine metric; we examine the conditions on $f$ in which cases it does (Definition \ref{defn:AdmissibleFunction}, such $f$ are called \textit{admissible}).
\begin{reptheorem}{thm:ProkhorovIsMetric}
Fix an admissible function $f\colon \R_{\ge 0} \to \R_{\ge 0}$. The discrete $f$-Prokhorov metric is an extended pseudometric.
\end{reptheorem}
In addition to theoretical development, we discuss algorithms to compute the bottleneck profile and various Prokhorov-type distances.
\begin{repproposition}{prop:ProkhorovAlgo}
Let $f\colon [0,\infty)\to [0,\infty)$ be monotonically increasing. Assume that the values and preimages of $f$ can be computed in $O(1)$.
Then $\pi_f(X,Y)$ can be computed in $O(n^2 \log(n))$.
\end{repproposition}
We provide a run-time analysis and experiments on a number of data sets. The algorithms are provided as an open source implementation. 

\section{Background}
\subsection{Measure Theory}
Let $(X,d)$ be a metric space.
It is \textit{complete} if every Cauchy sequence has a limit in $X$.
It is \textit{separable} if it has a countable dense subset.
A complete separable metrizable topological space is called a \textit{Polish space}.
For example, all Euclidean spaces $\R^n$ are Polish.
Polish spaces are a convenient setting for measure or probability theory.

In general, we endow $X$ with the Borel $\Sigma$-algebra $\mathfrak{B}(M)$ and denote the set of probability measures by $\mathcal{P}(X)$.

Let us recall an important inequality \cite[p. 6]{grafakosClassicalFourierAnalysis2014}:
\begin{lem}[Chebychev's inequality]\label{lem:Chebychev}
    Let $(X,\Sigma, \mu)$ be a measure space and let $f\colon X \to \R$ be a measurable function.
    Then for any $p>0$ and $t>0$,
    \[
    \mu\{\vert f(x)\vert>t\} \le \frac{1}{t^p}\int\limits_{\vert f\vert\ge t}\vert f\vert^p d\mu.
    \]
\end{lem}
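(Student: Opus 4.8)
The plan is to reduce the statement to the elementary pointwise observation that $|f|^p$ dominates the constant $t^p$ on the super-level set $A_t := \{x \in X : |f(x)| > t\}$, and then to invoke monotonicity of the integral. First I would dispose of the degenerate case $t = 0$, where $1/t^p = +\infty$ and the inequality is vacuous, so that from now on $t > 0$. Since $f$ is measurable, so is $|f|$, hence the sets $A_t = \{|f| > t\}$ and $B_t = \{|f| \ge t\}$ are measurable, and clearly $A_t \subseteq B_t$.

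Next, on $A_t$ we have $|f(x)| > t > 0$, and since $s \mapsto s^p$ is increasing on $[0,\infty)$ for $p > 0$, this yields the pointwise bound $|f(x)|^p \ge t^p$ for every $x \in A_t$. Integrating this bound over $A_t$ and pulling the constant out gives
\[
\int_{A_t} |f|^p \, d\mu \ \ge\ \int_{A_t} t^p \, d\mu \ =\ t^p\, \mu(A_t).
\]
Finally, because $|f|^p \ge 0$ and $A_t \subseteq B_t$, monotonicity over the larger domain gives $\int_{B_t} |f|^p \, d\mu \ge \int_{A_t} |f|^p \, d\mu$; chaining the two displays and dividing by $t^p > 0$ produces the claimed inequality.

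I do not expect any genuine obstacle here: the only points meriting a word of care are the measurability of the level sets (immediate from measurability of $f$), the convention at $t = 0$, and reconciling the integration domain $\{|f| \ge t\}$ appearing in the statement with the set $\{|f| > t\}$ whose measure is being bounded — which is handled by the inclusion $A_t \subseteq B_t$ together with nonnegativity of the integrand. If one prefers, the identical argument goes through verbatim with $\int_{\{|f|>t\}}$ in place of $\int_{\{|f|\ge t\}}$, yielding a formally sharper bound.
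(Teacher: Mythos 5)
Your proof is correct and is the standard argument for Chebychev's (Markov's) inequality; the paper itself states this lemma as a recalled classical fact without proof, so there is nothing to diverge from. Your handling of the mismatch between the strict-inequality set being measured and the non-strict-inequality domain of integration, via the inclusion $\{|f|>t\}\subseteq\{|f|\ge t\}$ and nonnegativity of the integrand, is exactly the right point to address.
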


\subsubsection*{Metrics for Probability Measures}

There are various ways to compare different probability measures.

\begin{defn}\label{defn:MeasureWasserstein}
For $p\ge 1$, the $p$-\textit{Wasserstein metric} is
\begin{align*}
	W_p\colon \mathcal{P}(X) \times \mathcal{P}(X) & \to [0,\infty]\\
	W_p(\mu,\nu) & = \inf\limits_\gamma\left( \int d(x,y)^p d\gamma(x,y)\right)^\frac{1}{p},
\end{align*}
where the infimum is taken over all couplings $\gamma$ with marginals $\mu$ and $\nu$.
\end{defn} 
The $1$-Wasserstein metric is also known as \textit{Kantorovich metric} or \textit{earth mover's distance}.
The latter name is motivated by the idea of thinking about $\gamma$ as a transport plan for moving a pile of earth $\mu$ into the pile $\nu$.
The cost of transportation equals the distance by which the earth is moved.

Intuitively, there are two different ways to ``slightly change'' a measure.
The first one is to move all the mass by a tiny distance.
The second one is to move a tiny part of the mass arbitrarily, possibly very far away.
While the Wasserstein metric is stable under perturbations of the first kind, small changes of the second kind can result in large differences in the metric.
The Prokhorov metric \cite{prokhorovConvergenceRandomProcesses1956} seeks to resolve this problem.
It is constructed in such a way that an $\eps$-neighborhood of a measure is characterized as follows: 
One may move $\eps$ of the mass arbitrarily and the rest by at most $\eps$, see Figure \ref{fig:ProkhorovExmp} for an illustration.

\begin{figure}[htb]
	\centering
	     \begin{tikzpicture}[scale=0.6]
		    \draw (0,0)--(8,0);
		    \draw (0,0)--(0,8);
		    \shade[inner color=red,outer color=white] (2,1.5) rectangle (4,3.5);
		    
		    \draw (9,0)--(17,0);
		    \draw (9,0)--(9,8);
		    \shade[inner color=red,outer color=white, cm={cos(15) ,-sin(15) ,sin(15) ,cos(15) ,(2,3)}] (9.5,1.5) rectangle (11,3);
		    \shade[inner color=red,outer color=white] (16.5,7) circle (0.2);
		 \end{tikzpicture}
    	\caption{Illustration of two Prokhorov-close measures\label{fig:ProkhorovExmp} which are not Wasserstein-close.}
\end{figure}
We now formalize this idea.

For a Borel set $A\subset X$, the (open) $\eps$-ball around $A$ is
\[
    A_\eps = \{ x\in X \colon d(x,A)<\eps\}
\]

The \textit{Prokhorov metric} $\pi$ for two probability measures $\mu, \nu$ is defined as follows:
\begin{align*}
	\pi\colon\mathcal{P}(X) \times \mathcal{P}(X) & \to [0,\infty]\\
	\pi(\mu, \nu) &= \inf\{\eps>0 \colon \forall A \in \mathfrak{B}(X): \mu(A) \le \nu(A_{\eps})+\eps \text{ and}\\
	&\phantom{=inf\{} \nu(A) \le \mu(A_{\eps})+\eps\}.
\end{align*}
By Strassen's Theorem (cf. Remark 1.29 in \cite{villaniTopicsOptimalTransportation2003}, Appendix 1.4), an alternative  characterization of the Prokhorov metric is given in terms of couplings $\gamma$ which marginalize to $\mu$ and $\nu$ (compare Figure \ref{fig:CouplingExmp}),
\begin{align}
	\pi(\mu, \nu) &= \inf \{ \inf\{\eps>0 \colon\gamma(\{(x,y)\colon d(x,y)>\eps\})<\eps\} \colon\notag \\ & \phantom{= inf \{} \gamma \text{ has marginals } \mu \text{ and  } \nu\}.\label{eqn:MeasureProkhorov}
\end{align}
This allows for a discretization suitable for persistence diagrams, see Section \ref{sec:PDMetrics}.

\begin{exmp}
    Let $x_1, x_2 \in X$ with $d(x_1,x_2)<1$ and consider the Dirac measures $\delta_{x_1}, \delta_{x_2}$.
    We claim that $\pi(\delta_{x_1}, \delta_{x_2}) = d(x_1,x_2)$.
    The only coupling with correct marginals is $\delta_{(x_1,x_2)}$.
    Then we have
    \[
        \delta_{(x_1,x_2)}(\{(x,y)\colon d(x,y)>\eps\}) = \begin{cases} 1 & \text{if } \eps<d(x_1,x_2),\\
        0 & \text{otherwise;}
        \end{cases}
    \]
    we write $\mathbbm{1}_{\eps<d(x_1,x_2)}$ as a shorthand notation for the right hand side.
    Consequently, we have
    \[
        \inf\{\eps >0 \colon \mathbbm{1}_{\eps<d(x_1,x_2)}<\eps\} = d(x_1, x_2).
    \]
\end{exmp}


\begin{figure}[htb]
	\centering
	     \begin{tikzpicture}[
    /pgfplots/y=1cm, /pgfplots/x=1cm 
]
		    \begin{axis}[
		    anchor=origin, 
            no markers, samples=100,
            axis lines*=center, 
            xtick=\empty, ytick=\empty,
            xmin=0, ymax=0, clip=false,
            enlargelimits=false, 
            hide y axis
            ]
            \addplot [fill=red, draw=none, domain=0:8, yshift = -0.1] {-1/(0.5*sqrt(2*pi))*exp(-((x-5.5)^2)/(2*0.5^2))} node[above,pos=0.68] {$\mu$}\closedcycle ;
        \end{axis}
        \begin{axis}[
            anchor=origin, 
            rotate around={90:(current axis.origin)}, 
            no markers, samples=1000,
            axis lines*=left, 
            xtick=\empty, ytick=\empty,
            xmin=0, ymin=0, xmax = 2,
            enlargelimits=false, clip=false, 
            hide y axis 
            ]
            \addplot [fill=blue, draw=none, domain=0:8, yshift = 0.1] {1/(0.1*sqrt(2*pi))*exp(-((x-1)^2)/(2*0.1^2)))} node[above,pos=0.5] {$\nu$}\closedcycle;
        \end{axis}
		    \shade[inner color=violet,outer color=white] (5.4,0.9) ellipse (2 and 0.5);
		    \node at (5.4,0.9){$\gamma$};

		 \end{tikzpicture}
    	\caption{Illustration of two measures $\mu$ and  $\nu$ and a coupling $\gamma$ of them.\label{fig:CouplingExmp}}
\end{figure}

In their survey \cite{gibbs}, Gibbs and Su show that the $1$-Wasserstein can be related to the Prokhorov metric via
\[
	\pi^2 \le W_1 \le (1+ \text{diam}(X)) \; \pi,
\]
where $\text{diam}(X)$ is the diameter of the underlying space. 
We provide discrete analogues of this estimate in Propositions \ref{prop:ProkhorovLeWasserstein} and \ref{prop:WassersteinLeProkhorov}.

For more on metrics of probability measures, see the book \cite{rachevProbabilityMetricsStability1991}; references for optimal transport include \cite{peyreComputationalOptimalTransport2020} which takes on a computational perspective.

\subsection{Persistent Homology}
\begin{defn}
The category $\pmod$ is the functor category 
\[
\pmod = \text{Fun}(\R, k\textbf{-mod}^{\text{fd}})
\]
from the reals as a poset category to finite dimensional vector spaces.
Its objects are called \textit{pointwise finite dimensional (p.f.d.) persistence modules}.
A p.f.d. persistence module $A = (A_t)_{t\in \R}$ comes with \textit{transition maps}
\[
    A_{s\leq t} \colon A_s \to A_t
\]
for $s\le t$.
\end{defn}

\begin{defn}
An \textit{interval module} for an interval $J\subset \R$ is a p.f.d persistence module with 
\[
    \mathbb{I}_{J, t} = \begin{cases}k &\textnormal{if }t\in J,\\
    0&\textnormal{otherwise,}
    \end{cases}
\]
and
\[
    \mathbb{I}_{J, s}\to \mathbb{I}_{J, t} = \begin{cases}\id &\textnormal{if } s,t\in J,\\
    0&\textnormal{otherwise,}
    \end{cases}
\]
for $s\leq t$.
The start and endpoint of $J$ are referred to as \textit{birth time} $b(J)$ and \textit{death time} $d(J)$, respectively.
Their difference $d(J)-b(J)$ is called the \textit{persistence} or \textit{lifetime} of an interval.
\end{defn}
Note that we do not specify whether the endpoints are contained in the interval; they may be $\pm \infty$.
Interval modules are of special interest because p.f.d. persistence modules admit an interval decomposition.
\begin{thm}[\cite{Crawley-Boevey2015Decomposition}, Theorem 1.1]
Let $\mathbb{A}\in \pmod$. Then there exists a collection of intervals $\mathcal{J}$ such that
\[
    A\cong \bigoplus\limits_{J\in \mathcal{J}}\mathbb{I}_J.
\]
\end{thm}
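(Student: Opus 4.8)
The plan is to reduce the theorem to two independent statements and then combine them: (a) every indecomposable object of $\pmod$ is isomorphic to an interval module $\mathbb{I}_J$ for some interval $J\subseteq\R$, and (b) every object of $\pmod$ is an internal direct sum of indecomposable submodules. Once both are in hand, one takes $\mathcal{J}$ to be the multiset of intervals attached to the indecomposable summands furnished by (b). I would first record a useful shortcut for the \emph{tame} case, which is the one actually needed downstream for persistence diagrams: a tame module is, up to isomorphism, determined by its restriction to the finite chain of critical values $t_1<\dots<t_n$ (a representation of an equioriented $A_n$ quiver, since all remaining transition maps are isomorphisms), so Gabriel's classification of $A_n$-representations immediately yields the interval decomposition. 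The general, possibly non-tame, case is what requires the machinery below.

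For (a), I would use point-wise finite dimensionality twice. Fix $c$ with $A_c\ne 0$. The images $\operatorname{im}(a_s^c)\subseteq A_c$, as $s$ ranges over $(-\infty,c]$, form a chain of subspaces totally ordered by inclusion, hence --- living in the finite-dimensional space $A_c$ --- they stabilize; dually the kernels $\ker(a_c^t)\subseteq A_c$ for $t\in[c,\infty)$ form a chain that stabilizes. These stable subspaces let one build, at all indices simultaneously, a basis compatible with every transition map (the functorial filtration, or covering, technique). Each block of the resulting decomposition is an interval module, so if $\mathbb{A}$ is indecomposable there is exactly one block: $\dim A_t\le 1$ for all $t$, the support is an interval $J$, and every transition map between nonzero terms is the identity, i.e. $\mathbb{A}\cong\mathbb{I}_J$. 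An equivalent route is to show $\operatorname{End}(\mathbb{A})$ is local for indecomposable p.f.d. $\mathbb{A}$ and rule out nontrivial idempotents directly.

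For (b), the difficulty is that the index poset $\R$ is uncountable and densely ordered, so no naive induction on a ``last coordinate'' is available. I would instead invoke an Azumaya-type theorem: a p.f.d. persistence module, being locally finite-dimensional, admits a decomposition that complements direct summands, which forces a decomposition into indecomposables with local endomorphism rings. Concretely, one shows every nonzero p.f.d. module contains an interval module as a direct summand, runs a Zorn's-lemma argument to obtain a maximal internal direct sum of interval submodules, and then argues this sum is all of $\mathbb{A}$: a nonzero complement would again be p.f.d., would by (a) contain a further interval summand, and this would contradict maximality.

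The hard part will be exactly this exhaustion step in (b). Over a finite $A_n$ quiver it is Gabriel's theorem and nothing more, but over $\R$ one must control infinitely many transition maps at once and make certain the functorial filtration --- or the Zorn construction --- actually recovers all of $\mathbb{A}$ rather than a proper submodule. Carrying the string-algebra style filtration argument over to a dense total order, with the attendant care about limits of the stabilizing chains of images and kernels, is the technical core of Crawley-Boevey's proof, and is where I expect the real work to lie.
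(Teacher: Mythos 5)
The paper does not prove this statement at all: it is imported verbatim as \cite{Crawley-Boevey2015Decomposition}, Theorem~1.1, so there is no in-paper proof to compare yours against. Judged on its own terms, your outline correctly identifies the standard two-step architecture (indecomposables are interval modules; every p.f.d.\ module decomposes into indecomposables) and correctly locates the difficulty, but it is a plan rather than a proof, and the step you defer is precisely the one that cannot be deferred. Your Zorn argument produces a \emph{maximal} internal direct sum $M=\bigoplus_\alpha \mathbb{I}_{J_\alpha}$ of interval submodules, and you then say a nonzero complement would contradict maximality --- but the existence of a complement to $M$ is exactly what is in question. A maximal direct sum of submodules of a prescribed type need not be a direct summand, and in a general module category a nonzero module need not have any indecomposable direct summand at all; so the argument as written is circular at its key joint. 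What actually closes this gap in Crawley-Boevey's proof is the functorial-filtration counting lemma: one builds, from the one-sided limits of the chains $\operatorname{im}(a_s^c)$ and $\ker(a_c^t)$ (the limits as $s\uparrow c$, $s\downarrow c$, etc., not merely their eventual values at $\pm\infty$), four interleaved filtrations of each $A_c$, and proves that the graded pieces both compute the multiplicities and \emph{span} every $A_c$. That spanning statement is the exhaustion you need, and your sketch acknowledges it only as ``where I expect the real work to lie.''

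Two smaller points. First, your tame-case shortcut is not quite right as stated: restricting a tame $\R$-indexed module to the critical values $t_1<\dots<t_n$ discards the information of whether each interval is open or closed at its endpoints, so the restriction to that finite chain does not determine the module up to isomorphism; you must also sample one point in each open interval $(t_i,t_{i+1})$ (and at $\pm\infty$) before Gabriel's $A_n$ classification applies, and then translate the decorated endpoints back. Second, the alternative route you mention --- local endomorphism rings plus an Azumaya-type theorem --- gives \emph{uniqueness} of a decomposition into indecomposables with local endomorphism rings; the \emph{existence} of such a decomposition for p.f.d.\ modules over an arbitrary poset is itself a nontrivial theorem (Botnan--Crawley-Boevey) and cannot simply be ``invoked'' for uncountable, densely ordered index sets without the covering argument that proves it.
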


Such an interval decomposition (sometimes called \textit{barcode)} can be visualized via a persistence diagram.
\begin{defn}
A \textit{persistence diagram} (PD) is multiset of points in $\R^2$, consisting of
\begin{itemize}
    \item points above the diagonal $(b,d), b<d$, each with finite multiplicity and
    \item each point on the diagonal $\Delta = \{(s,s)\in \R^2\}$ with countable multiplicity.
\end{itemize}
\end{defn}
The convention to include diagonal points with infinite multiplicity will be useful for the construction of distances between persistence diagrams.

To obtain a PD from the above interval decomposition, collect the birth and death times of the intervals
\[
    \Dgm(A) = \{(b,d) \in (\R\cup \{\infty\})^2 \colon \langle b,d\rangle \in \mathcal{J}\}
\]
(where the angled brackets indicate that the endpoints may or may not be included);
add all the points on the diagonal with countable multiplicities.
Off-diagonal points have finite multiplicities since the persistence module is pointwise finite dimensional.
We will freely identify off-diagonal points in the diagram with the corresponding interval.
Points close to the diagonal have a short lifetime and are often regarded as noise.
To compare persistence diagrams, we consider one-to-one correspondences between them.
To take care of different cardinalities of off-diagonal points and to get rid of noisy, short-lifetime points, we allow them to be mapped to the diagonal. 
This explains the inclusion of the diagonal with infinite multiplicity in the above definition.

\begin{defn}
A \textit{matching} $\eta$ between persistence diagrams $X$ and $Y$ is a bijection which fixes all but finitely many diagonal points. 
The \textit{cardinality} or \textit{size} of  a matching $\eta$, denoted by $\left\vert\eta\right\vert$ is the number of points which are not fixed.
\end{defn}

\begin{defn}\label{def:BottleneckDistance}
The \textit{bottleneck distance} between two persistence diagrams $X,Y$ is
\[
    W_\infty(X,Y) = \inf \limits_{\eta\colon X\to Y} \sup\{d(x, \eta(X))\colon x\in X\}, 
\]
where $\eta$ ranges over all matchings.
\end{defn}
\begin{defn}\label{def:WassersteinDistance}
Let $1\le p <\infty$.
The $p$-\textit{Wasserstein distance} between two persistence diagrams $X,Y$ is
\[
    W_p(X,Y) = \inf\limits_{\eta\colon X \to Y} \left(\sum \limits_{x \in X} d(x, \eta(x))^p\right)^\frac{1}{p},
\]
where $\eta$ ranges over all matchings.
\end{defn}

The notation of Definition \ref{def:WassersteinDistance} has the advantage of being compact, but note that we have uncountably many summands. Although usually, only finitely many, namely $\vert\eta\vert$ for the optimal matching $\eta$, will be non-zero.
Similarly, also only finitely many elements of the uncountable set of which we take the supremum in Definition \ref{def:BottleneckDistance} are non-zero.
\begin{defn}
Let $p\geq 1$.
We say a persistence diagram $X$ has \textit{finite pth moment}, if the $p$-Wasserstein distance to the empty diagram is finite: $W_p(X,\emptyset) <\infty.$
\end{defn}
Except from section \ref{subsec:MetricProperties}, the persistence diagrams in this paper are assumed to have finitely many off-diagonal points.
Therefore, the infima in definitions \ref{def:BottleneckDistance} and \ref{def:WassersteinDistance} are actually minima.

Notice the analogy between Definitions \ref{defn:MeasureWasserstein} and \ref{def:WassersteinDistance}.
We replace probability measures by counting measures and hence turn the integral into a sum.
The infimum is taken over all matchings instead of all couplings.
This observation will serve as a blueprint for the construction of the discrete Prokhorov metric for persistence diagrams in the Section \ref{sec:PDMetrics}.

The motivation to compare persistence diagrams comes from topological data analysis, where they serve as a sumray statistic of topological information.
\begin{exmp}
Given a finite subset $X$ of some metric space, we can consider the Vietoris-Rips complex, cf. \cite{EdelsbrunnerHarer2010CompTop} III.2.
This is a filtered simplicial complex; for filtration value $r>0$ it is given by
\[
    \VR(X)[r] = \{S \subset X \colon \diam(S)<2r \}.
\]
Applying the homology functor gives rise to a persistence module:
For $r\le s$, the inclusion
\[
    \VR(X)[r] \hookrightarrow \VR(X)[s]
\]
induces a module homomorphism
\[
    H_\ast(\VR(X)[r]) \to H_\ast(\VR(X)[s]).
\]
This is called the \textit{(Vietoris-Rips) persistent homology} of $X$, denoted $PH_\ast(X)_t$.
Summands in its interval decomposition are interpreted as topological features which are ``born'' at a certain point in the filtration and ``persist'' for some time.
They are regarded to be more significant the longer the intervals are. 
The following theorem ascertains that this is a useful tool.
\begin{thm}[{{\cite[Theorem 3.1]{chazal2009gromov}}}]\label{thm:ChazalStability}
Let $X,Y$ be finite metric spaces, fix some $k\ge 0$. 
Then we have
    \[W_\infty(\Dgm(PH_k(X)), \Dgm(PH_k(Y))) \le 2d_{\textnormal{GH}}(X,Y),\]
    where $d_{\text{GH}}$ is the Gromov-Hausdorff distance \cite{gromov2007metric}.
    \footnote{Note that \cite{chazal2009gromov} uses a different notion of Gromov-Hausdorff distance, which is equal to two times $d_{GH}$ of \cite{gromov2007metric}.}
\end{thm}
In other words, if we change the input point cloud by $\eps$ in the Gromov-Hausdorff metric, the resulting PDs differ by at most $\eps$ in the Bottleneck distance.
\end{exmp}

\section{Bottleneck Profiles}\label{sec:BottleneckProfiles}
\begin{figure}[!tphb]
    \centering
    \includegraphics[width=\textwidth]{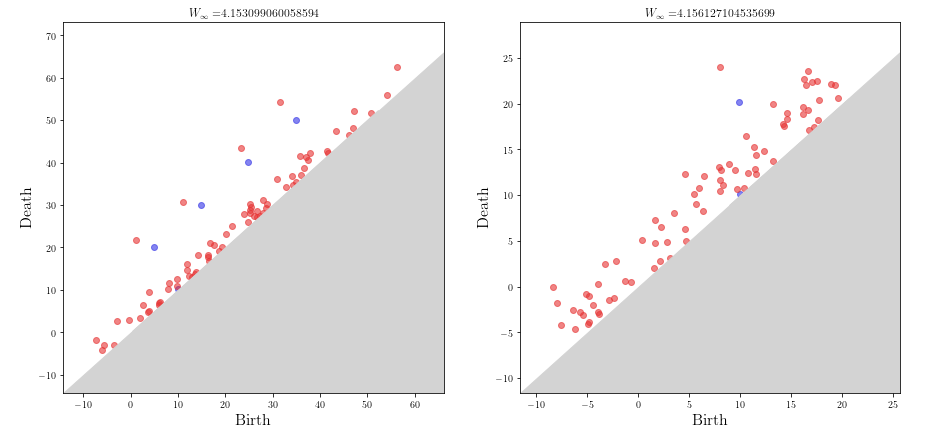}
    \caption{Four bottlenecks on the left, a single bottleneck on the right, realizing almost the same bottleneck distance.}
    \label{fig:1and4Bottlenecks}
\end{figure}
The bottleneck distance $W_\infty$ has a major drawback: 
It only captures the single most extreme difference between two persistence diagrams.
This implies that the same bottleneck distance can be realized by different pairs of persistence diagrams, cf Figure \ref{fig:1and4Bottlenecks}.
We introduce the notion of the bottleneck profile to address the topic of secondary, tertiary,... bottlenecks and their multiplicities.

\begin{defn}\label{defn:DfunctionPD} Given two persistence diagrams $X, Y$, define their \textit{bottleneck profile} to be
\[
    D_{X,Y}\colon \R_{\ge 0} \to \N \cup \{\infty\},\qquad t \mapsto \inf\limits_{\eta\colon X \to Y} \lvert\{x \colon d(x,\eta(x))>t\}\rvert;
\]
where $\vert \cdot \vert$ denotes the cardinality of the set.
\end{defn}
For $d: \R^2 \times \R^2 \to \R_{\ge 0}$ we take an $\ell^p$-metric $d(x,y) = \|x-y\|_p$, where the choice of $p$ might depend on the setting. 
For example, when comparing with the $p$-Wasserstein distance, one might like to choose this same $p$.

Since the infimum is taken over a subset of the natural numbers, it is actually a minimum.
To be consistent with the notation in definitions \ref{def:BottleneckDistance} and \ref{def:WassersteinDistance}, we choose to adhere to the use of infimum.

The following observation is immediate:

\begin{lem}
The bottleneck profile $D_{X,Y}$ is monotonically decreasing.
\end{lem}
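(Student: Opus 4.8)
The plan is to deduce the monotonicity from an elementary containment of level sets that holds matching by matching, and then to pass to the infimum. Concretely, I would fix $s \le t$ in $\R_{\ge 0}$ and aim to show $D_{X,Y}(t) \le D_{X,Y}(s)$, which is exactly the assertion that $D_{X,Y}$ is non-increasing.

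First I would work with a single fixed bijection $\eta\colon X \to Y$ (diagonal points included). Since $d(x,\eta(x)) > t$ implies $d(x,\eta(x)) > s$, we get the inclusion
\[
    \{x \colon d(x,\eta(x)) > t\} \subseteq \{x \colon d(x,\eta(x)) > s\},
\]
and hence, comparing cardinalities in $\N \cup \{\infty\}$,
\[
    |\{x \colon d(x,\eta(x)) > t\}| \le |\{x \colon d(x,\eta(x)) > s\}|.
\]

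Second I would pass to the infimum over matchings. By Definition~\ref{defn:DfunctionPD}, $D_{X,Y}(t) \le |\{x \colon d(x,\eta(x)) > t\}|$ for every $\eta$; chaining this with the previous inequality gives $D_{X,Y}(t) \le |\{x \colon d(x,\eta(x)) > s\}|$ for every $\eta$, and taking the infimum over $\eta$ on the right-hand side yields $D_{X,Y}(t) \le D_{X,Y}(s)$.

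There is no genuine obstacle here; the only points worth a line of care are that the order relation and the infima are taken in $\N \cup \{\infty\}$, where everything behaves as expected, and that the step $\inf_\eta f(\eta) \le \inf_\eta g(\eta)$ is legitimate precisely because $f(\eta) \le g(\eta)$ holds for \emph{every} $\eta$ rather than only for optimal ones — a subtlety that matters because the infimum defining $D_{X,Y}$ need not be attained.
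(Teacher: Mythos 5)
Your proof is correct and follows essentially the same route as the paper's: the level-set inclusion $\{x \colon d(x,\eta(x))>t\} \subseteq \{x \colon d(x,\eta(x))>s\}$ for a fixed matching, followed by passing to the infimum. If anything, your version is slightly more careful: the paper begins with ``the matching realizing $D_{X,Y}(s)$'', implicitly assuming the infimum is attained, whereas you argue for every $\eta$ and only then take infima, which sidesteps that (minor) issue.
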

\begin{proof}
Let $\eta\colon X \to Y$ be any matching realizing $D_{X,Y}(s)$ for some $s$.
Let now $t>s$, then every distance longer than $t$ is in particular longer than $s$ and consequently
\[
    \lvert\{x \colon d(x,\eta(x))>t\}\rvert \le \lvert\{x \colon d(x,\eta(x))>s\}\rvert = D_{X,Y}(s).
\]
Taking the infimum over all matchings decreases the left hand side and yields $D_{X,Y}(t)$.
\end{proof}

Knowing this, it is interesting when the bottleneck profile becomes zero.
\begin{lem}\label{lem:DBottleneck}
$D_{X,Y}(t) = 0 \Leftrightarrow t \ge W_\infty(X,Y)$.
\end{lem}
\begin{proof}
By definition, the bottleneck distance is the smallest $t>0$ such that there is a matching mapping all points within distance $t$.
In formulas,
\begin{align*}
W_\infty(X,Y) &= \inf\{t>0 \colon \inf\limits_{\eta\colon X \to Y} \lvert\{x \colon d(x,\eta(x))>t\}\rvert = 0\}\\
&=\inf\{t>0 \colon D_{X,Y}(t) = 0\}.
\end{align*}
\end{proof}

Thus we recover the bottleneck distance from the bottleneck profile.
The bottleneck cost of a matching is the longest distance over which two points are matched.
Minimizing the bottleneck cost over all matchings yields the bottleneck distance, which we can think of as the primary bottleneck.
Similarly, the secondary bottleneck cost of a matching is the second longest distance over which two points are matched.
Taking the minimum over all matchings here gives a notion of a secondary bottleneck, which equals $\inf\{t>0 \colon D_{X,Y}(t) \leq 1\}$ by an argument analogous to the previous proof.
This motivates the name bottleneck profile.

\begin{exmp}\label{exmp:singletons}
Let $X=\{x\}$ and $Y=\{y\}$ both consist of one point each and assume that $d(x,y)<d(x,x')+d(y,y')$, where the prime denotes the projection to the diagonal.
That means that $x\mapsto y$ is an optimal matching.
Consequently, the bottleneck profile looks as follows:
\[
    D_{X,Y}(t) = \begin{cases}
        1 & \text{if } 0\le t \le d(x,y),\\
        0 & \text{if } t> d(x,y).
    \end{cases}
\]

\end{exmp}

\begin{exmp}\label{exmp:StableRank}
If we take one of the persistence diagrams to be the empty one, there is only one choice of matching: everytthing is paired with the diagonal.
As a consequence,
\[
D_{X,\emptyset}(t) = \lvert\{x = (x_1, x_2)\colon \frac{x_2-x_1}{2}>t\}\rvert = \lvert\{x = (x_1,x_2) \colon x_1 + 2t < x_2\}\rvert.
\]
This is also known as the stable rank function corresponding to the contour $C(a, \eps) = a + 2 \eps$, introduced in \cite{chacholskiMetricsStabilizationOne2020},
which counts the bars of $X$ of length $>2t$.
\end{exmp}

\begin{figure}[tbh]
    \centering
    \makebox[\textwidth][c]{
    \includegraphics[width=1.5\textwidth]{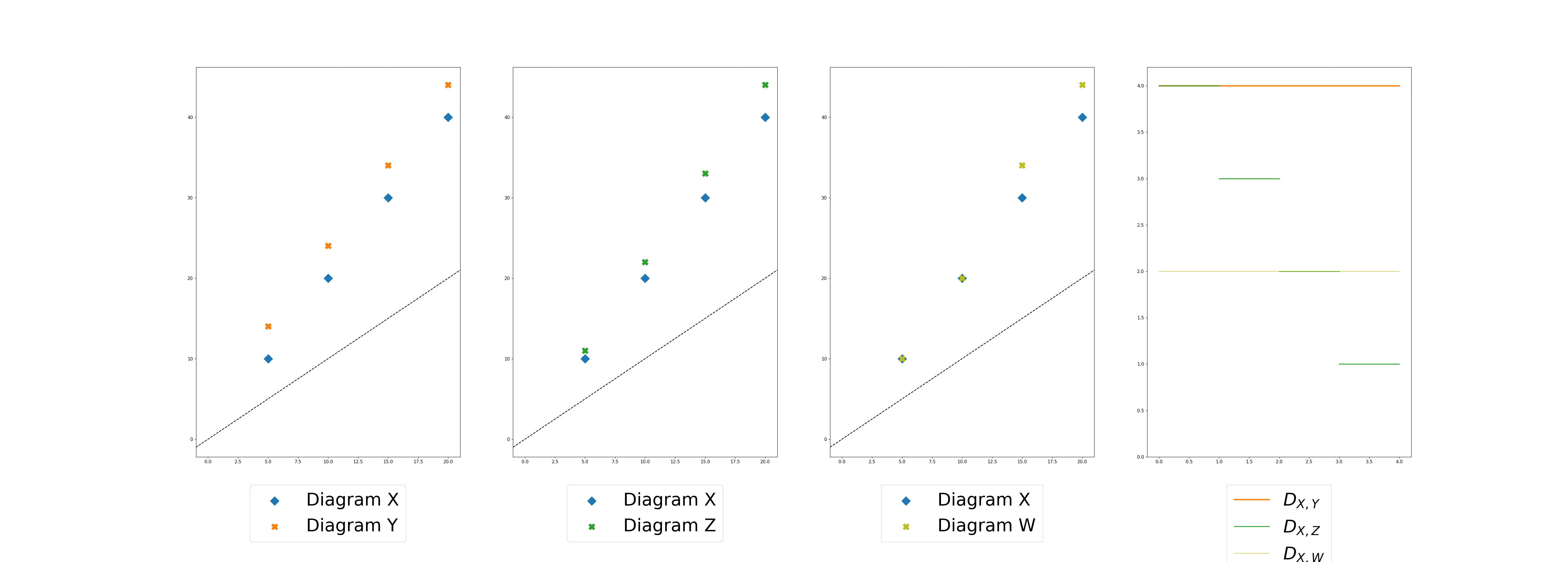}}
    \caption{\label{fig:D_linear}The PD $X$ has bottleneck distance $3$ to each of the PDs $Y,Z, W$ (first three images).
    However, it is attained with different multiplicities, which one can read off from the bottleneck profile (right-most image)}
    
\end{figure}
\begin{exmp}\label{exmp:bottleneckprofile}
Consider some particular simple persistence diagrams.
The first three parts of Figure \ref{fig:D_linear} each show a base diagram (``Diagram $X$'', in blue) with four points and perturbations of it:
The orange diagram (``Diagram $Y$'') in the first image is obtained by shifting the blue one by three.
The green diagram (``Diagram $Z$'') shifts the top point of $X$ by three, the next point by two, the third by one and leaves the lowest point unchanged.
For the yellow diagram (``Diagram $W$'') in the third image, we only shift two points from $X$ by three and leave the other two untouched.
Clearly, the bottleneck distance between the base diagram and each of the shifted versions is three.
But the amount of shifted points is reflected in the bottleneck profile:
While $D_{X,Y}(t)$ is four, $D_{X,Z}(t)$ is two (i. e. the multiplicity of the bottleneck) for $0<t<3$.
And $D_{X,W}$ displays more steps, reflecting the fact that there are secondary and tertiary bottlenecks.
\end{exmp}

Note that the function $D$ enjoys some properties reminiscent of a metric (hence the notation $D$):
It is obviously symmetric.
The triangle inequality does not hold pointwise but in a scaled version, that is:
\begin{lem}\label{lem:TriangleD}
For all persistence diagrams $X,Y,Z$ and all real numbers $s,t\ge 0$,
$D_{X,Z}(s+t)\le D_{X,Y}(s)+D_{Y,Z}(t)$.
\end{lem}
\begin{proof}
This follows from the triangle inequality on $\R^2$.
Fix $s,t\ge 0$, let $\eta_{X,Y} \colon X \to Y$ and $\eta_{Y,Z}\colon Y \to Z$ denote optimal matchings realizing $D_{X,Y}(s)$ and $D_{Y,Z}(t)$, respectively.
Let $\eta = \eta_{Y,Z}\circ \eta_{X,Y}\colon X \to Z$ be the matching obtained by composition.
It suffices to show that
\[
    \lvert\{x\colon d(x,\eta(x))>s+t\}\rvert\le \lvert\{x\colon d(x,\eta_{X,Y}(x))>s\}\rvert + \lvert\{y\colon d(y,\eta_{Y,Z}(y))>t\}\rvert,
\]
because the left hand side only decreases if we take the infimum over all matchings.
Hence we have to investigate what happens when a point $x$ is matched to $\eta(x)$ which is farther apart than $s+t$. 
Note that $\eta(x) = \eta_{Y,Z}( \eta_{X,Y}(x))$, so we compare the distances of the matched points using the triangle inequality,
\[
    s+t < d(x,\eta(x)) \le d(x,\eta_{X,Y}(x)) + d(\eta_{X,Y}(x), \eta(x)).
\]

\begin{figure}[!htbp]
    \centering
    \begin{tikzpicture}
            \node[vertex] (v1) {};
            \node[vertex] (v2) [right=of v1] {};
            \node[vertex] (v3) [below=of v1] {};
            
            \draw[thick] (v1)--(v2)--(v3)--(v1);
            
            \node[left] at (v1.west){$\eta_{X,Y}(x)$};
            \node[right] at (v2.east){$\eta(x) = \eta_{Y,Z}( \eta_{X,Y}(x))$};
            \node[below] at (v3.south){$x$};
        \end{tikzpicture}
    \caption{The situation in the proof of Lemma \ref{lem:TriangleD}}
    \label{fig:Triangle}
\end{figure}
Therefore, it cannot be that both $d(x,\eta_{X,Y}(x))\leq s$ and $d(\eta_{X,Y}(x), \eta(x))\leq t$ (compare Figure \ref{fig:Triangle}).
That means, we have $d(x,\eta_{X,Y}(x))> s$ or $d(\eta_{X,Y}(x), \eta(x))> t$ or both.
Using the principle of inclusion-exclusion, conclude
\begin{align*}
    \lvert\{x\colon d(x,\eta(x))>s+t\}\rvert &= \lvert\{x\colon d(x,\eta_{X,Y}(x))>s\}\rvert + \lvert\{y\colon d(y,\eta_{Y,Z}(y))>t\}\rvert\\
    &\phantom{=}- \lvert\{x \in \colon d(x,\eta_{X,Y}(x))>s \text{ and } d(\eta_{X,Y}(x),\eta_{Y,Z}(\eta_{X,Y}(x))) > t\}\rvert\\
    &\leq \lvert\{x\colon d(x,\eta_{X,Y}(x))>s\}\rvert + \lvert\{y\colon d(y,\eta_{Y,Z}(y))>t\}\rvert.
\end{align*}
\end{proof}

Note that $D_{X,Y}(t)=0$ for all $t>0$ implies $X=Y$ only under some finiteness assumptions.
For example, consider a converging sequence $(a_n)_{n\in \N} \subset \R^2$ above the diagonal with limit $a\not\in (a_n)$, which is also above the diagonal.
Set $X$ to consist of all elements of the sequence $\{a_n \colon n\in \N\}$.
Set $Y$ to be $X \cup \{a\}$.
Then for all $\eps > 0$ there exists $\eta \colon X\to Y$ such that $d(x, \eta(x))<\eps$ for all $x\in X$.
Therefore, $D_{X,Y}(t) = 0$ for every $t>0$, but $X\neq Y$.

Following \cite{blumbergRobustStatisticsHypothesis2014}, we denote by $\Bar{\mathcal{B}}$ the set of persistence diagrams such that for each $\eps>0$ there are finitely many points of persistence $>\eps$.
The next lemma is an immediate consequence of \cite[Lemma 3.4]{blumbergRobustStatisticsHypothesis2014}.
\begin{lem}\label{lem:D=0impliesX=Y}
The bottleneck profile satisfies $D_{X, X}(t) = 0$ for all Persistence diagrams $X$ and $t>0$.
Moreover, $D_{X, Y}(t) = 0$ for all $t>0$ implies $X=Y$ for $X, Y\in \bar{\mathcal{B}}$.
\end{lem}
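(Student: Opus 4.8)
The plan is to treat the two claims separately. For $D_{X,X}(t)=0$, I would simply use the identity matching $\id\colon X\to X$, for which $d(x,\id(x))=0\le t$ for every $x$ and every $t>0$; hence $|\{x\colon d(x,\id(x))>t\}|=0$, and the infimum over all matchings can only be $0$.

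For the second claim, the quickest route is through Lemma~\ref{lem:DBottleneck}, whose proof identifies $W_\infty(X,Y)=\inf\{t>0\colon D_{X,Y}(t)=0\}$. Under the hypothesis $D_{X,Y}(t)=0$ for every $t>0$, this set is all of $(0,\infty)$, so its infimum is $0$ and therefore $W_\infty(X,Y)=0$. It then remains to show that $W_\infty(X,Y)=0$ implies $X=Y$ for $X,Y\in\bar{\mathcal{B}}$, and this is exactly \cite[Lemma 3.4]{blumbergRobustStatisticsHypothesis2014}, which establishes that $W_\infty$ restricts to a genuine metric on $\bar{\mathcal{B}}$.

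The point worth emphasising is that the hypothesis $X,Y\in\bar{\mathcal{B}}$ is essential: the example immediately preceding the lemma (a convergent sequence of off-diagonal points together with its limit) already has $W_\infty(X,Y)=0$ while $X\neq Y$. So all of the real content lives in the cited lemma, whose argument runs roughly as follows. For a fixed $\eps>0$ there are only finitely many points of persistence exceeding $\eps$ in either diagram; a matching witnessing $W_\infty(X,Y)<\eps$ must pair these finitely many high-persistence points bijectively with one another, since none of them can be sent to the diagonal or to a point of small persistence within distance $\eps$. Hence the off-diagonal parts of $X$ and $Y$, truncated at persistence $\eps$, agree as multisets, and letting $\eps\to 0$ yields $X=Y$. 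If one chooses to reprove this rather than cite it, the main obstacle is making the claim that nothing escapes to the diagonal precise, via a finiteness argument valid on all of $\bar{\mathcal{B}}$; if one simply cites \cite[Lemma 3.4]{blumbergRobustStatisticsHypothesis2014}, the lemma follows in a couple of lines and there is no real obstacle.
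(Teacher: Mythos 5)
Your proposal is correct and follows essentially the same route as the paper: reduce to $W_\infty(X,Y)=0$ via Lemma~\ref{lem:DBottleneck} and then invoke \cite[Lemma 3.4]{blumbergRobustStatisticsHypothesis2014}; the identity-matching argument for $D_{X,X}(t)=0$ is the obvious one (the paper leaves it implicit). Your additional sketch of the cited lemma's internals is a reasonable bonus but not needed, since the citation carries that weight in both your argument and the paper's.
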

\begin{proof}
If $D_{X, Y}(t) = 0$ for all $t>0$, then $W_\infty(X,Y)=0$ by Lemma \ref{lem:DBottleneck}.
Now for $X,Y \in \bar{\mathcal{B}}$, this only happens if $X=Y$ by \cite[Lemma 3.4]{blumbergRobustStatisticsHypothesis2014}. 
\end{proof}

\subsection{Relation to  Wasserstein distances}
We have already seen how the bottleneck profile is related to the bottleneck distance. 
This is actually part of a more general result comparing it to $p$-Wasserstein metrics.

\begin{lem}\label{lem:DWasserstein}
Let $X, Y$ be two persistence diagrams, and let $p>0$. Then
\[
    D_{X,Y}(t)  \leq \frac{1}{t^p}W_p(X,Y)^p.
\]
\end{lem}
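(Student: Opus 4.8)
The plan is to recognize the bottleneck profile as a counting version of a "level set" measure and then apply the discrete analogue of Chebychev's inequality (Lemma~\ref{lem:Chebychev}). First I would fix an arbitrary matching $\eta\colon X \to Y$ and observe that the quantity $|\{x \colon d(x,\eta(x))>t\}|$ is the counting measure of the super-level set $\{x : d(x,\eta(x)) > t\}$ of the function $x \mapsto d(x,\eta(x))$ on the (countable) index set $X$. Since on the diagonal the contribution is zero and off-diagonal points have finite multiplicities, this is a genuine (possibly infinite) nonnegative integer and the sums below make sense.

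The key step is the pointwise estimate: for every $x$ with $d(x,\eta(x)) > t$ we have $1 \le \frac{d(x,\eta(x))^p}{t^p}$, since $t > 0$. Summing over the super-level set gives
\[
    |\{x \colon d(x,\eta(x))>t\}| \;=\; \sum_{\substack{x \in X \\ d(x,\eta(x))>t}} 1
    \;\le\; \frac{1}{t^p}\sum_{\substack{x \in X \\ d(x,\eta(x))>t}} d(x,\eta(x))^p
    \;\le\; \frac{1}{t^p}\sum_{x \in X} d(x,\eta(x))^p.
\]
Now take the infimum over all matchings $\eta$: the left-hand side becomes $D_{X,Y}(t)$ by Definition~\ref{defn:DfunctionPD}, and the infimum of the right-hand side is exactly $\frac{1}{t^p} W_p(X,Y)^p$ by the definition of the $p$-Wasserstein metric for persistence diagrams. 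This is the claimed inequality. (As a sanity check, specializing $p \to \infty$ recovers the fact that $D_{X,Y}(t) = 0$ for $t > W_\infty(X,Y)$, consistent with Lemma~\ref{lem:DBottleneck}.)

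The only subtlety to watch is the order of quantifiers: the pointwise bound holds for a \emph{fixed} $\eta$, so one must first bound $|\{x : d(x,\eta(x))>t\}|$ by $\frac{1}{t^p}\sum_x d(x,\eta(x))^p$ for that same $\eta$, and only afterwards pass to the infimum on both sides simultaneously. There is no genuine obstacle here — the argument is essentially the Markov/Chebychev inequality transported to counting measures on matchings — but one should be mildly careful that if $W_p(X,Y) = \infty$ the statement is vacuous, and if $W_p(X,Y) < \infty$ then the infimizing (or near-infimizing) matchings have finite cost so the right-hand sums are finite and the manipulation is legitimate.
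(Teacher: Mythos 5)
Your proposal is correct and follows essentially the same route as the paper: fix an arbitrary matching, apply the pointwise Chebychev/Markov bound $1 \le d(x,\eta(x))^p/t^p$ on the super-level set, extend the sum to all of $X$, and then pass to the infimum over matchings. Your extra remark about the order of quantifiers is a slightly more careful phrasing of the paper's final step, but it is not a different argument.
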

\begin{proof}
This follows from the Chebychev inequality (Lemma \ref{lem:Chebychev}) for counting measures.
To spell out the details, estimate that for every bijection $\eta$
\begin{align*}
    \lvert\{x \colon d(x,\eta(x)) >t\}\rvert 
        &= \sum \limits_{\{x \colon d(x,\eta(x)) >t\}} 1\\
        &\leq \sum \limits_{\{x \colon d(x,\eta(x)) >t\}} \frac{d(x,\eta(x))^p}{t^p}\\
        &\leq \sum \limits_{x\in X} \frac{d(x,\eta(x))^p}{t^p}\\
        &=\frac{1}{t^p}\sum \limits_{x\in X} d(x,\eta(x))^p.
\end{align*}
Now choosing $\eta$ to minimize the right hand side, we have by definition of the Wasserstein distance an estimate for $D_{X,Y}$:
\[
    D_{X,Y}(t) \leq \lvert\{x \colon d(x,\eta(x)) >t\}\rvert \leq \frac{1}{t^p}W_p(X,Y)^p.
\]
\end{proof}
This is illustrated by Figure \ref{fig:D+Wasserstein}.
\begin{figure}[htb]
    \centering
    \includegraphics[width=0.7\textwidth]{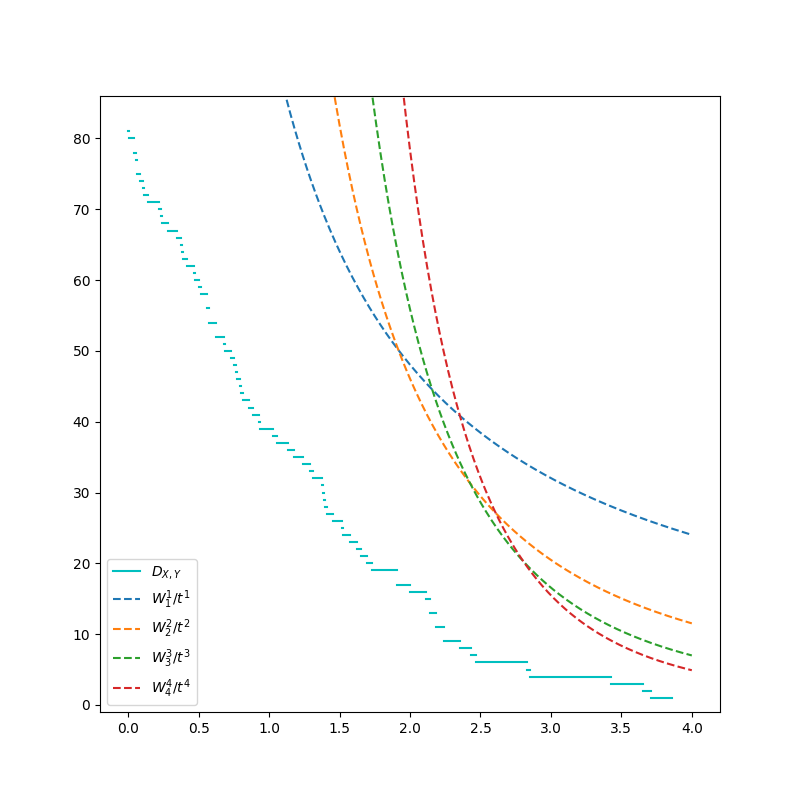}
    \caption{An example for the relation between $D_{X,Y}$ and the Wasserstein distance. 
    }
    \label{fig:D+Wasserstein}
\end{figure}
Note that we recover Lemma \ref{lem:DBottleneck} in the limit for $p \to \infty$:
\[
\left(\frac{W_p}{t}\right)^p \longrightarrow \begin{cases}
\infty & \text{if }t<W_\infty\\
1 & \text{if }t=W_\infty\\
0 & \text{if }t>W_\infty.
\end{cases}
\]
For $1$-Wasserstein, we have a further estimate:

\begin{lem}\label{lem:IntBPleqW1}
$ \int_0^\infty D_{X,Y}(t) dt \leq W_1(X,Y) $.
\end{lem}
\begin{figure}
    \centering
    \begin{tikzpicture}
        \draw[->] (0,0.5)--(0,5);
        \draw[->] (0,0.5)--(6,0.5);
        \node[left] at (0,3) {\footnotesize$i$};
        \node[left] at (0,2) {\footnotesize$i-1$};

        \draw[draw=gray,dashed,fill=gray] (0,2) rectangle ++(3,1);
        \draw [
        thick,
        decoration={
        brace,
        mirror,
        raise=0.05cm
        },
        decorate
        ] (0,2) -- (3,2);
            
        \draw [blue,[-)] (0.5,3)--(3,3);
        \draw[blue, [-)] (3,2)--(5,2);
        \node[blue, right] at (5,2) {\footnotesize$\lvert\{x \colon d(x,\eta(x))>t\}\rvert$};
        
        \draw (3,.4)--(3,.6);
        \draw[dashed] (3,.4)--(3,2);
        \node[below] at (3,0.5) {\footnotesize$\inf\{t>0\colon\lvert\{x\colon d(x,\eta(x))>t\}\rvert<i\}$};
    \end{tikzpicture}
    \caption{Illustrating the proof of Lemma \ref{lem:IntBPleqW1}: Decomposing the area under the graph into rectangles.}
    \label{fig:RectanglesUnderCurve}
\end{figure}

\begin{proof}
Let $\eta\colon X\to Y$ be the matching realizing $W_1(X,Y)$.
We compute the area under the graph of the function $t\mapsto \lvert\{x \colon d(x,\eta(x))>t\}\rvert$, which is piece-wise constant.
Decomposing it into rectangles of height one yields a width of $\inf\{t>0\colon\lvert\{x\colon d(x,\eta(x))>t\}\rvert<i\}$ for $i\geq 1$, cf. Figure \ref{fig:RectanglesUnderCurve}.
The width of the $i$th rectangle is the length of the $i$th longest edge in the matching.
Summing over all $i$ is therefore the same as summing the distances over which points are matched.
 In formulas:
\begin{align*}
    W_1(X,Y) & = \sum\limits_{x\in X} d(x,\eta(x))\\
     &= \sum \limits_{i \ge 1} \inf\{t> 0 \colon \lvert\{x \colon d(x,\eta(x))>t\}\rvert < i\}\\
    & = \int _0 ^\infty \lvert\{x \colon d(x,\eta(x))>t\}\rvert dt\\
    & \ge \int_0^\infty D_{X,Y}(t) dt.
\end{align*}
\end{proof}
\begin{prop}
If the bottleneck profile $D_{X,Y}(t)$ can be realized by the same matching $\eta$ for all $t>0$, then $\eta$ realizes $W_1(X,Y)$.
\end{prop}
\begin{proof}
If $\eta$ realizes $D_{X,Y}(t)$ for all $t>0$, then the inequality in the proof of the previous lemma becomes an equality
\[
    \int_0^\infty D_{X,Y}(t) dt = \int _0 ^\infty \lvert\{x \colon d(x,\eta(x))>t\}\rvert dt = \sum\limits_{x\in X} d(x,\eta(x)) \overset{(\ast)}{\geq} W_1(X,Y).
\]
Combining this with Lemma \ref{lem:IntBPleqW1}, we obtain
\[
    \int_0^\infty D_{X,Y}(t) dt = W_1(X,Y).
\]
Consequently, the inequality $(\ast)$ is actually an equality, which is what we wanted to prove.
\end{proof}

\subsection{Algorithms}
Recall the definition
\begin{align*}
    D_{X,Y}(t) &= \inf\limits_{\eta} \lvert\{x \colon d(x,\eta(x))>t\}\rvert,
\end{align*}
and let $\eta$ be the matching realizing the infimum.
Then $\eta$ also realizes the following supremum:
\[
    \sup\limits_{\eta} \lvert\{x \colon d(x,\eta(x))\le t\}\rvert,
\]
and consequently
\[
    D_{X,Y}(t) = \lvert\eta\rvert - \sup\limits_{\eta} \lvert\{x \colon d(x,\eta(x))\le t\}\rvert.
\]
Here, $\lvert\eta\rvert$ denotes the number of matched pairs which involve at least one off-diagonal point.
The computation of $\sup\limits_{\eta} \lvert\{x \colon d(x,\eta(x))\le t\}\rvert$ is a version of the unweighted maximum cardinality bipartite matching problem.
First, set up the following notation (following \cite[chapter VIII.4]{EdelsbrunnerHarer2010CompTop}).
Denote by $X_0$ the off-diagonal points of $X$ and by $X_0'$ their projections to the diagonal (and analogously for $Y$).w
Set $U = X_0 \cup Y_0'$ and $V = Y_0 \cup X_0'$ and consider the bipartite graph $G=(U \cup V, E)$ with $e=\{u,v\} \in E$ if either of the following holds:
\begin{itemize}
    \item $u\in X_0, v\in Y_0$ and $d(u,v) \le t$,
    \item $u\in X_0, v \in X_0'$ is its projection to the diagonal and $d(u,v) \le t$,
    \item $v\in Y_0, u \in Y_0'$ is its projection to the diagonal and $d(u,v) \le t$,
    \item $u \in Y_0'$ and $v \in X_0'$.
\end{itemize}

Let $M\subset E$ be a matching of maximal cardinality.
Observe that such a matching corresponds to a bijection $\eta \colon X \to Y$ maximizing $\lvert\{x \colon d(x,\eta(x))\le t\}\rvert$.

To estimate the run-time of this algorithm, let $n = \vert X\vert+\vert Y\vert$.
We solve the unweighted maximum cardinality bipartite matching problem using the Hopcroft-Karp algorithm \cite{hopcroft_n52_1973}.
Let us briefly recall this classical algorithm.
The algorithm extends a partial matching $M$ until it reaches a maximum one.
It achieves this by augmenting paths: A path $p$ that starts at an unmatched vertex in $U$ and ending at an unmatched vertex in $V$ such that edges from $U$ to $V$ are not in $M$ but edges from $V$ to $U$ are.
Removing edges from $p\cap M$ from the matching and instead inserting edges from $p\cap(E\setminus M)$ increases the size of $M$ by one.
The Hopcroft-Karp algorithm finds vertex-disjoint augmenting paths in $O(n^2)$ via the so-called \textit{layer subgraph}, which is constructed via a depth-first search in $O(n^2)$.
After extending the matching using all these augmenting paths, the algorithm starts over.
The algorithm terminates after $O(\sqrt{n})$ of these iterations.

While this consequently takes $O(n^{2.5})$ in the worst case, we perform a variant which exploits the geometric nature of the setting, as suggested in \cite{Efrat2001Geometry}.
Instead of building the layer graph explicitly, one can use a geometric data structure that allows for querying neighbors within a given distance, as well as removing points.
Following \cite{Kerber2017Geometry}, k-d trees achieve this requiring $O(\sqrt{n})$ for either of the two operations.
Consequently, as noted by \cite{Kerber2017Geometry} and \cite{Efrat2001Geometry}, our variant of the Hopcroft-Karp algorithm runs in $O(n^2)$.
Summarizing, we find the following:
\begin{prop}\label{prop:DRuntime}
Let $X, Y$ be finite persistence diagrams and denote $n = \vert X\vert +\vert Y\vert $.
The value of the bottleneck profile at $t$, $D_{X,Y}(t)$, can be computed in $O(n^2)$.
\end{prop}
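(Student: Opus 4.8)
The statement is about running time, so the work splits into the correctness of the reduction to unweighted bipartite matching sketched just above, and a geometry-aware implementation of that matching computation meeting the claimed complexity. For the first part I would simply invoke the construction already in place: a maximum-cardinality matching $M\subseteq E$ in the bipartite graph $G=(U\cup V,E)$ corresponds to a bijection $\eta\colon X\to Y$ matching as many points as possible within distance $t$ (compare \cite[Chapter VIII.4]{EdelsbrunnerHarer2010CompTop}), and $D_{X,Y}(t)$ is read off from $M$ in $O(n)$ time. Thus everything reduces to computing one maximum-cardinality matching in $G$.

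To do this I would run the Hopcroft--Karp algorithm \cite{hopcroft_n52_1973}: it runs in $O(\sqrt{|U|+|V|}) = O(\sqrt n)$ phases, where each phase builds a layered graph by a breadth-first search from the currently unmatched vertices and then greedily extracts a maximal family of vertex-disjoint shortest augmenting paths. A textbook phase costs $O(|E|)$, and since $G$ can have $\Theta(n^2)$ edges this would only give $O(n^{2.5})$. The point, following \cite{Efrat2001Geometry}, is that one need not list the edges at all: the queries that drive the BFS layering and the augmenting-path search are of the form ``produce a still-unvisited vertex on the opposite side within distance $t$, and delete it'', and these can be answered geometrically by storing the off-diagonal points of $X$ and of $Y$ in dynamic kD-trees. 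A single such query together with the re-insertions needed between consecutive path searches costs $O(\sqrt n)$ \cite[p.~1.4:8]{Kerber2017Geometry}. The complete-bipartite block $Y_0'\times X_0'$ of ``free'' edges between projected diagonal points, which alone has $\Theta(n^2)$ edges, must not be materialized; but a matched projected diagonal point can always be rerouted to an unmatched one in $O(1)$, so those vertices only need two counters. With these structures a phase visits each of the $O(n)$ vertices a constant number of times, spending $O(\sqrt n)$ per visit, hence $O(n^{1.5})$ per phase.

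Multiplying, the matching is found in $O(\sqrt n)\cdot O(n^{1.5})=O(n^2)$, which dominates the $O(n\log n)$ spent building the kD-trees, and the proposition follows. The step I expect to be the real obstacle is not a single estimate but the bookkeeping that makes the geometric implementation faithful: one must check that the kD-tree queries genuinely reproduce the BFS layering and the extraction of a maximal set of vertex-disjoint shortest augmenting paths of Hopcroft--Karp, that visited vertices are excluded from later queries within the same phase and correctly reinserted afterwards, and that the three kinds of off-diagonal/diagonal edges interact correctly with the implicit free block. This is precisely what is worked out in \cite{Efrat2001Geometry} and \cite{Kerber2017Geometry}, on which the argument rests.
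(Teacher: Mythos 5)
Your proposal follows essentially the same route as the paper: reduce $D_{X,Y}(t)$ to an unweighted maximum-cardinality bipartite matching on the graph $G$ constructed above, and run Hopcroft--Karp with the geometric neighbor-search variant of \cite{Efrat2001Geometry}, using kD-tree operations at $O(\sqrt{n})$ each (\cite[p.~1.4:8]{Kerber2017Geometry}) to get $O(n^{1.5})$ per phase over $O(\sqrt{n})$ phases. You in fact supply more implementation detail than the paper does (notably the treatment of the implicit complete block between projected diagonal points), and the argument is correct.
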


\begin{rem}
Using k-d trees is useful in practice, but does not yield optimal theoretical run-times.
Indeed, the more sophisticated data structure from \cite{Efrat2001Geometry}, Section 5.1, can be constructed in $O(n \log(n))$.
The two relevant operations on it require $O(\log(n))$, so that the bottleneck profile could be evaluated in $O(n^{1.5}\log(n))$ using this method.
\end{rem}

\begin{rem}
Instead of using Hopcroft-Karp, one can regard the matching problem as a linear program.
For each $x \in X$ and $y\in Y$, we have a binary variable $f_{xy}$ indicating whether the edge from $x$ to $y$ is in the matching.
The coefficients (the cost of the edge) are given by 
\[
    c_{xy} = \begin{cases}
    1 &\text{if } d(x,y) > t,\\
    0 & \text{otherwise.}
    \end{cases}
\]
The objective is 
\begin{align*}
\text{minimize } &\sum \limits_{x,y} c_{xy}f_{xy}\\
\text{subject to }    &\forall x \in X\colon \sum\limits_y f_{xy} = 1, \;
    \forall y \in Y\colon \sum\limits_x f_{xy} = 1.
\end{align*}

\end{rem}

\section{Discrete Prokhorov Metrics for Persistence Diagrams}\label{sec:PDMetrics}
A straight-forward discretization of the coupling characterization of the probabilistic Prokhorov metric \eqref{eqn:MeasureProkhorov} gives the main notion of this section.
\begin{defn}\label{defn:discreteProkhorov}
Given two persistence diagrams $X, Y$, consider matchings $\eta\colon X \to Y$ to define their \textit{Prokhorov distance} as
\begin{align*}
    \pi(X,Y) &= \inf\{t>0\colon D_{X,Y}(t)<t\}\\
    &= \inf\{t>0\colon \inf\limits_{\eta\colon X\to Y} \lvert\{x \colon d(x,\eta(x))>t\}\rvert<t\}.
\end{align*}
\end{defn}

Informally, we look at the intersection of the bottleneck profile with the diagonal.
Similarly, we have already seen that the bottleneck distance arises as the intersection of $D_{X,Y}$ with the horizontal axis.
This motivates the the question, what functions we can intersect the bottleneck profile with to obtain a sensible notion of distance.

\begin{defn}\label{defn:AdmissibleFunction}
Consider a function $f\colon [0,\infty] \to [0,\infty]$.
We say $f$ is \textit{superadditive} if for any $s,t \geq 0$ we have $f(s+t) \geq f(s) + f(t)$.
A superadditive function $f$ is called \textit{admissible} if $\lim\limits{t\searrow0}f(t) = 0$.
Furthermore, the function $f\equiv 1$ is also said to be admissible.
\end{defn}
Notice that such superadditive functions are monotonically non-decreasing.
For example, any linear function with non-negative slope is admissible.
Moreover, increasing convex functions $f$ with $f(0)=0$ are admissible
For instance polynomials with non-negative coefficients and absolute term zero fulfill this criterion.

\begin{defn}
Given a fixed admissible function $f\colon \R_{\ge 0} \to \R_{\ge 0}$, define for any two PDs $X,Y$ their $f$\textit{-Prokhorov distance} to be
\begin{align*}
    \pi_f(X,Y) &= \inf\{t>0\colon D_{X,Y}(t)<f(t)\}\\
    &= \inf\{t>0\colon \inf\limits_{\eta} \lvert\{x\colon d(x,\eta(x))>t\}\rvert<f(t)\} .
\end{align*}

\end{defn}
Plugging in $f = id$ gives the Prokhorov distance, plugging in $f\equiv 1$ recovers the bottleneck distance (this is why this function is admissible even though it is not superadditive).

Intuitively, for $n\in \N$, plugging in $f\equiv n$ (although this is \textit{not} an admissible function) gives the $n$th bottleneck.

For two Prokhorov-close PDs, we require the number (=counting measure) of unmatched points to be small.
Points with small persistence get matched to the diagonal and thus do not blow up the Prokhorov distance.
Hence it is robust with respect to noise.

\begin{exmp}
Assume $f$ is invertible.
Recall the situation of Example \ref{exmp:singletons}: $X=\{x\}$ and $Y=\{y\}$ both consist of one point each and we assume that $d(x,y)<d(x,x')+d(y,y')$, where the prime denotes the projection to the diagonal.
We saw that the bottleneck profile looks as follows:
\[
    D_{X,Y}(t) = \begin{cases}
        1 & \text{if } 0\le t \le d(x,y),\\
        0 & \text{if } t> d(x,y).
    \end{cases}
\]
It follows that
\[
    \pi_f(X,Y) = \min(f^{-1}(1), d(x,y)).
\]
\end{exmp}

\begin{lem}\label{lem:cadlag}
For $f$ admissible, $D_{X,Y}(\pi_f(X,Y)) \le f(\pi_f(X,Y))$.
\end{lem}
\begin{proof}
Note that $D_{X,Y}$ is right-continuous by construction.

\end{proof}

The triangle inequality follows from Lemma \ref{lem:TriangleD}.
\begin{lem}\label{lem:TriangleIneq}
Fix an admissible function $f\colon \R_{\ge 0} \to \R_{\ge 0}$. For any three persistence diagrams $X,Y,Z$, we have 
\[
\pi_f(X,Z) \le \pi_f(X,Y)+\pi_f(Y,Z).
\]
\end{lem}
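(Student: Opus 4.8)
The plan is to combine the scaled triangle inequality for the bottleneck profile (Lemma \ref{lem:TriangleD}) with the superadditivity of $f$. Write $s = \pi_f(X,Y)$ and $t = \pi_f(Y,Z)$; we want to show $\pi_f(X,Z) \le s+t$. By definition of the infimum, for every $\eps > 0$ there are values $s' \in (s, s+\eps)$ and $t' \in (t, t+\eps)$ with $D_{X,Y}(s') < f(s')$ and $D_{Y,Z}(t') < f(t')$. The idea is to plug these into Lemma \ref{lem:TriangleD} at the point $s' + t'$ and then use superadditivity to bound $f(s') + f(t')$ from above by $f(s'+t')$.

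First I would carry out the core estimate: by Lemma \ref{lem:TriangleD},
\[
D_{X,Z}(s'+t') \le D_{X,Y}(s') + D_{Y,Z}(t') < f(s') + f(t').
\]
Then superadditivity of $f$ gives $f(s') + f(t') \le f(s'+t')$, so $D_{X,Z}(s'+t') < f(s'+t')$. By the definition of $\pi_f$ as an infimum over all $u$ with $D_{X,Z}(u) < f(u)$, this yields $\pi_f(X,Z) \le s' + t' < s + t + 2\eps$. Letting $\eps \to 0$ gives $\pi_f(X,Z) \le s + t = \pi_f(X,Y) + \pi_f(Y,Z)$, as desired.

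A few edge cases need a remark rather than real work. If either $\pi_f(X,Y)$ or $\pi_f(Y,Z)$ is $+\infty$ the inequality is trivial, so assume both are finite. One should also note the degenerate possibility that the defining set $\{u > 0 : D_{X,Z}(u) < f(u)\}$ is empty, in which case $\pi_f(X,Z) = \inf \emptyset = +\infty$; the argument above precisely exhibits an element $s'+t'$ of that set, so this does not occur. If one wants to handle the infimum's non-attainment with strict versus non-strict inequalities carefully, the $\eps$-slack already built into the choice of $s'$ and $t'$ absorbs it.

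I do not expect a serious obstacle here: the only subtlety is bookkeeping around the infimum (choosing $s', t'$ strictly above $s, t$ so the strict inequalities $D < f$ hold, then passing to the limit), and ensuring the monotonicity of $f$ combined with superadditivity is used in the right direction. The one assumption that is genuinely essential and worth flagging is superadditivity — $f(s') + f(t') \le f(s'+t')$ is exactly what converts the "scaled" triangle inequality of Lemma \ref{lem:TriangleD} into an honest triangle inequality for $\pi_f$; without it the argument breaks.
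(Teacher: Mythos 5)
Your proof is correct and takes essentially the same route as the paper: apply the scaled triangle inequality of Lemma \ref{lem:TriangleD} at the sum of the two Prokhorov values and convert $f(s')+f(t')$ into $f(s'+t')$ via superadditivity. In fact your $\eps$-bookkeeping with $s',t'$ strictly above the infima is slightly more careful than the paper's version, which plugs in $\pi_f(X,Y)$ and $\pi_f(Y,Z)$ directly and asserts $D_{X,Y}(\pi_f(X,Y))<f(\pi_f(X,Y))$ ``by definition'' --- an inequality that need not hold at the infimum itself when it is not attained.
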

\begin{proof}
We make the following estimates:
\begin{align*}
    D_{X,Z}(\pi_f(X,Y) + \pi_f(Y,Z)) &\le D_{X,Y}(\pi_f(X,Y))+D_{Y,Z}(\pi_f(Y,Z))\\
    & \le f(\pi_f(X,Y))+f(\pi_f(Y,Z))\\
    & \le f(\pi_f(X,Y)+\pi_f(Y,Z)).
\end{align*}
Here we used Lemma \ref{lem:TriangleD} for the first inequality, Lemma \ref{lem:cadlag} for the second and superadditivity of $f$ for the final one.
Therefore,
\[
    \inf \{t>0 \colon D_{X,Z}(t)<f(t)\} \le \pi_f(X,Y) + \pi_f(Y,Z);
\]
the left hand side is the definition of $\pi_f(X,Z)$, as desired. 
\end{proof}

As the symmetry is clear, we have shown:
\begin{thm}\label{thm:ProkhorovIsMetric}
Fix an admissible function $f\colon \R_{\ge 0} \to \R_{\ge 0}$. The discrete $f$-Prokhorov metric is an extended pseudometric.
\end{thm}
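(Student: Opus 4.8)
The plan is to verify the axioms of an extended pseudometric for $\pi_f$ one at a time, deferring only the triangle inequality to the work already done. Recall that an \emph{extended pseudometric} on the set of persistence diagrams is a function $\pi_f\colon \mathbf{PD}\times\mathbf{PD}\to[0,\infty]$ satisfying (i) $\pi_f(X,X)=0$, (ii) symmetry $\pi_f(X,Y)=\pi_f(Y,X)$, and (iii) the triangle inequality. (We drop the separation axiom $\pi_f(X,Y)=0\Rightarrow X=Y$, which is exactly why the word "pseudo" appears; this is consistent with the discussion preceding Lemma~\ref{lem:D=0impliesX=Y}, where the converging-sequence example shows separation genuinely fails without finiteness hypotheses.) The values are allowed to be $+\infty$ because two diagrams of different total mass, or with infinitely many high-persistence points, may admit no matching with bounded defect; hence "extended".

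First I would record symmetry: by Definition~\ref{defn:DfunctionPD} the quantity $D_{X,Y}(t)=\inf_{\eta\colon X\to Y}|\{x:d(x,\eta(x))>t\}|$ is symmetric in $X,Y$, since a bijection $\eta\colon X\to Y$ and its inverse $\eta^{-1}\colon Y\to X$ displace the same set of matched pairs and $d$ is symmetric; therefore $\{t>0:D_{X,Y}(t)<f(t)\}=\{t>0:D_{Y,X}(t)<f(t)\}$ and the infima agree, giving $\pi_f(X,Y)=\pi_f(Y,X)$. This is the step flagged as "clear" in the statement, so I would keep it to a single sentence.

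Next I would show $\pi_f(X,X)=0$. Since $f$ is non-negative and superadditive, $f(t)>0$ for $t>0$: indeed $f(2t)\ge 2f(t)$ forces $f(t)\ge 0$ and, more importantly, if $f(t_0)=0$ for some $t_0>0$ then superadditivity and monotonicity pin $f$ to $0$ on $[0,t_0]$, but we only need $f(t)>0$ for \emph{small} $t$ in the degenerate case $f\equiv 0$ too — there the claim is just that $\pi_0=W_\infty$ and $W_\infty(X,X)=0$, which follows from Lemma~\ref{lem:D=0impliesX=Y}. In general, by Lemma~\ref{lem:D=0impliesX=Y} we have $D_{X,X}(t)=0$ for every $t>0$, hence $D_{X,X}(t)=0<f(t)$ holds for all $t>0$ with $f(t)>0$, and such $t$ can be taken arbitrarily small; therefore the infimum defining $\pi_f(X,X)$ is $0$.

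Finally, the triangle inequality is precisely Lemma~\ref{lem:TriangleIneq}, whose hypothesis is that $f$ is superadditive — exactly the hypothesis of the present theorem. Combining the three items yields the claim. The only point requiring genuine care is the first sentence of the background remark I inserted about $f(t)>0$: one should make sure the edge case $f\equiv 0$ is handled (it reduces to the Bottleneck metric, whose pseudometric property is classical and also recoverable from Lemmas~\ref{lem:DBottleneck} and~\ref{lem:TriangleD}), and that when $f$ is not identically zero one does have $f(t)>0$ for all sufficiently small $t>0$ — which follows because a superadditive function vanishing on an interval $[0,t_0]$ and monotone increasing would have to vanish identically by iterating $f(nt)\ge nf(t)$ downward. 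I expect no real obstacle here; the substance of the theorem was already carried by Lemma~\ref{lem:TriangleIneq}, and this statement is the bookkeeping that assembles the pseudometric axioms.
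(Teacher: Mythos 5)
Your overall structure---symmetry is immediate from the symmetry of $D_{X,Y}$, and the triangle inequality is exactly Lemma~\ref{lem:TriangleIneq}---is the same as the paper's, which simply cites that lemma and declares symmetry clear. Where you go beyond the paper is in trying to verify reflexivity $\pi_f(X,X)=0$, which the paper passes over in silence; that extra care is the right instinct, but the step you use to discharge it is false.

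Concretely, you claim that a non-negative, monotone, superadditive $f$ that is not identically zero must satisfy $f(t)>0$ for all sufficiently small $t>0$, ``by iterating $f(nt)\ge nf(t)$ downward.'' Superadditivity only propagates positivity \emph{upward}: from $f(t)=0$ the inequality $f(nt)\ge nf(t)=0$ is vacuous. The function $f(t)=\max(0,t-1)$ is non-negative, monotone increasing and superadditive (check the cases $s,t\le 1$; $s\le 1<t$; $1<s,t$), vanishes on $[0,1]$, and is positive beyond. For this $f$ we have $D_{X,X}(t)=0$ for all $t>0$ by Lemma~\ref{lem:D=0impliesX=Y}, yet $D_{X,X}(t)<f(t)$ only for $t>1$, so $\pi_f(X,X)=1\neq 0$. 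The same problem occurs literally for $f\equiv 0$, where the defining set $\{t>0\colon D_{X,X}(t)<0\}$ is empty and $\pi_0(X,X)=\infty$; the identification $\pi_0=W_\infty$ on which you lean requires reading the strict inequality $D<0$ as $D<1$. So reflexivity is not mere bookkeeping: it genuinely fails for superadditive $f$ vanishing on an interval at the origin, and your argument cannot be patched without strengthening the hypotheses (e.g.\ assuming $f(t)>0$ for all $t>0$, or relaxing the strict inequality in the definition of $\pi_f$). Under such an extra hypothesis your proof is correct and complete, and it is more thorough than the paper's; without it, neither your argument nor the paper's establishes the theorem exactly as stated.
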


Just like for the bottleneck distance, we need some finiteness property for the $\pi_f$ to be a genuine metric.
Let $\Bar{\mathcal{B}}$ denote the persistence diagrams which for every $\eps>0$ have only finitely many points of persistence $>\eps$.
Then Lemma \ref{lem:D=0impliesX=Y} implies:
\begin{lem}\label{lem:GenPrDefininite}
Let $f\colon \R_{\ge 0} \to \R_{\ge0}$ be admissible. 
For $X,Y\in \Bar{\mathcal{B}}$, we have $\pi_f(X,Y) = 0$ only if $X=Y$.
\end{lem}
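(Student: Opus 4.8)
The plan is to reduce the statement to the bottleneck case (Lemma \ref{lem:D=0impliesX=Y}) by showing that the hypothesis $\pi_f(X,Y)=0$ forces $D_{X,Y}(t)=0$ for every $t>0$. Once that is established, Lemma \ref{lem:D=0impliesX=Y} applied to $X,Y\in\Bar{\mathcal{B}}$ immediately gives $X=Y$, so no further work is needed.

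First I would unwind the definition: $\pi_f(X,Y)=0$ means $\inf\{t>0\colon D_{X,Y}(t)<f(t)\}=0$, i.e.\ for every $\eps>0$ there exists some $t$ with $0<t<\eps$ and $D_{X,Y}(t)<f(t)$. The key observation is that $D_{X,Y}$ takes values in $\N\cup\{\infty\}$, so $D_{X,Y}(t)<f(t)$ together with $D_{X,Y}(t)\in\N$ and the fact that $f(t)$ can be made arbitrarily small forces $D_{X,Y}(t)=0$ for those small $t$. Concretely: pick a sequence $t_n\searrow 0$ with $D_{X,Y}(t_n)<f(t_n)$; since $\lim_{t\searrow 0}f(t)=0$, for $n$ large we have $f(t_n)\le 1$, hence $D_{X,Y}(t_n)<1$, and being a non-negative integer, $D_{X,Y}(t_n)=0$. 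Now invoke monotonicity of the bottleneck profile (the "monotonically decreasing" lemma): for any fixed $t>0$, choosing $n$ large enough that $t_n<t$ gives $D_{X,Y}(t)\le D_{X,Y}(t_n)=0$. Thus $D_{X,Y}(t)=0$ for all $t>0$.

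Finally I would apply Lemma \ref{lem:D=0impliesX=Y}, which states precisely that $D_{X,Y}(t)=0$ for all $t>0$ implies $X=Y$ when $X,Y\in\Bar{\mathcal{B}}$. This closes the argument. (It is also worth noting as a sanity check that the converse direction — $X=Y$ implies $\pi_f(X,Y)=0$ — is immediate from $D_{X,X}\equiv 0$ in Lemma \ref{lem:D=0impliesX=Y} and $\lim_{t\searrow 0}f(t)=0$, so the "only if" phrasing in the statement is the substantive direction.)

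I do not anticipate a serious obstacle here; the only subtlety is making sure the integrality of $D_{X,Y}$ is used correctly — it is essential that $f$ need not vanish identically but only tend to $0$, and that a strict inequality between an integer and something tending to $0$ collapses the integer to $0$. If one wanted to be careful about the infimum defining $\pi_f$ not being attained, the sequential formulation above handles it cleanly without needing a minimizing $t$.
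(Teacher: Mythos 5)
Your proposal is correct and follows essentially the same route as the paper: show that $\pi_f(X,Y)=0$ forces $D_{X,Y}(t)=0$ for all $t>0$ via monotonicity of the bottleneck profile and $\lim_{t\searrow 0}f(t)=0$, then invoke Lemma \ref{lem:D=0impliesX=Y}. Your sequential handling of the infimum (and the use of integrality of $D_{X,Y}$) is a slightly more careful rendering of the same argument, not a different one.
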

\begin{proof}
If $\pi_f(X,Y) = 0$, then $D_{X,Y}(t)<f(t)$ for all $t>0$.
As the bottleneck profile is monotonically decreasing and $\lim \limits _{t\searrow 0} f(t) = 0$, this implies $D_{X,Y}(t)=0$ for all $t>0$.
By Lemma \ref{lem:D=0impliesX=Y}, this happens only if $X=Y$.%
\end{proof}

Our next task is to investigate how $\pi_f$ depends on the function $f$.
While from a metric point of view, we need to fix $f$, the context of data science suggests a different perspective:
For given training data (a fixed set of persistence diagrams) adjust $f$ to obtain a metric that performs well on it (e.g. in a classification problem, cf. section \ref{subsec:Experiments}).
\begin{lem}\label{lem:MonotoneDecr}
Let $f, g \colon \R_{\ge0} \to \R_{\ge 0}$ such that $f(t)\le g(t)$ for all $t\ge 0$.
Then for any two persistence diagrams $X,Y$, we have $\pi_g(X,Y) \le \pi_f(X,Y)$.
\end{lem}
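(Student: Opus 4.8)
The plan is to unwind the definitions and observe that raising the threshold function only makes the defining inequality $D_{X,Y}(t) < f(t)$ easier to satisfy, so the infimum over the qualifying set of $t$ can only go down. Concretely, I would argue directly at the level of the sets whose infima define $\pi_f$ and $\pi_g$.

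First I would fix two persistence diagrams $X,Y$ and set
\[
    S_f = \{t>0 \colon D_{X,Y}(t) < f(t)\}, \qquad S_g = \{t>0 \colon D_{X,Y}(t) < g(t)\},
\]
so that by definition $\pi_f(X,Y) = \inf S_f$ and $\pi_g(X,Y) = \inf S_g$. Next I would show $S_f \subseteq S_g$: if $t \in S_f$, then $D_{X,Y}(t) < f(t) \le g(t)$, where the first inequality is membership in $S_f$ and the second is the hypothesis $f \le g$; hence $D_{X,Y}(t) < g(t)$, i.e.\ $t \in S_g$. Finally, since the infimum of a set is monotone under inclusion (a larger set has an infimum no larger), $S_f \subseteq S_g$ gives $\inf S_g \le \inf S_f$, that is, $\pi_g(X,Y) \le \pi_f(X,Y)$. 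A small edge case worth a sentence: if $S_f = \emptyset$ then $\pi_f(X,Y) = \inf \emptyset = +\infty$ (the distance is extended-real-valued), and the inequality holds trivially; otherwise the argument above applies verbatim.

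I do not anticipate a genuine obstacle here — the statement is essentially a monotonicity tautology once the sets are written down. The only thing to be careful about is the direction of the inequality (larger $f$ yields \emph{smaller} $\pi_f$), which is exactly what the set inclusion $S_f \subseteq S_g$ encodes, and the convention for $\inf \emptyset$ in the extended reals. No structural properties of $D_{X,Y}$ (monotonicity, the triangle-type inequality, etc.) are needed for this particular lemma.
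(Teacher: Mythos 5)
Your argument is correct and is essentially identical to the paper's own proof: both show that any $t$ with $D_{X,Y}(t)<f(t)$ also satisfies $D_{X,Y}(t)<g(t)$, so the set defining $\pi_g$ contains the one defining $\pi_f$ and its infimum is no larger. Your extra remark about the empty-set case is a harmless addition the paper leaves implicit.
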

\begin{proof}
If $t>0$ satisfies $D_{X,Y}(t)<f(t)$, then also $D_{X,Y}(t)<g(t)$.
Therefore, 
 \[
 \inf\{t>0\colon D_{X,Y}(t)<g(t)\} \le \inf\{t>0\colon D_{X,Y}(t)<f(t)\}
 \]
 and by definition $\pi_g(X,Y) \le \pi_f(X,Y)$.
\end{proof}
For fixed persistence diagrams, the Prokhorov metric is continuous with respect to the functions in supremum metric.
\begin{prop}\label{lem:PiContinuous}
Fix two persistence diagrams $X,Y$. Let $f\colon \R_{\ge 0} \to \R_{\ge 0}$ be admissible.
Theen for all $\eps>0$ there is $\delta>0$ such that for each admissible $g\colon \R_{\ge0}\to \R_{\ge 0}$, wwe have
\[
\|f-g\|_\infty < \delta \Rightarrow \lvert\pi_f(X,Y) - \pi_g(X,Y)\rvert < \eps. 
\]
\end{prop}
\begin{proof}
Without loss of generrality, assume that $f(\pi_f(X,Y)) \leq g(\pi_g(X,Y))$ (otherwise exchange $f$ and $g$ below).
This implies $\pi_f(X,Y) \geq \pi_g(X,Y)$ by monotonicity of $D_{X,Y}$.
We choose $\delta < f(\eps)$ and estimate
\begin{align*}
    f(\pi_g(X,Y) + \eps) & \geq f(\pi_g(X,Y)) + f(\eps) & \textnormal{by superadditivity}\\
            &>f(\pi_g(X,Y)) + \delta & \textnormal{by choice of }\delta\\
            &>f(\pi_g(X,Y)) + \|f-g\|_\infty &\textnormal{by choice of }g\\
            &\geq g(\pi_g(X,Y)) & \textnormal{by definition of the sup-norm}\\
            & \geq f(\pi_f(X,Y)) &\textnormal{by assumption.}
\end{align*}
By monotonicity of $f$ we find that
\[
    \pi_f(X,Y) - \pi_g(X,Y) = \vert \pi_f(X,Y) - \pi_g(X,Y)\vert < \eps.
\]
\end{proof}
From a data science perspective, the preceding Lemma allows us to tune the parameter function $f$ on a fixed training set of persistence diagrams.

\subsection{Comparison with Wasserstein}\label{sec:CompWWasserstein}
Fix a persistence diagram $X$ and consider Wasserstein metrics and Prokhorov distances to some other diagram $Y$.
We can perturb $Y$ by adding more ``noise''.
More precisely, we add $k$ points whose distance to the diagonal is less than $\pi_f(X,Y)$ and denote this diagram by $Y_k$.
This does not affect the Prokhorov metric at all, while for all $p \in [1,\infty)$, the value of $W_p(X,Y_k)$ goes to infinity when $k$ does.
This is what we mean when we say that the Prokhorov metric is more robust with respect to noise compared to the Wasserstein metric.
In other (more mathematical) words, the identity map $\id \colon (\Dgm, \pi_f)\to (\Dgm, W_p)$, where $Dgm$ is the set of all persistence diagrams, is nowhere continuous for $p\in[1,\infty)$\footnote{To avoid such problems, one usually restricts to a subset of $Dgm$ of diagrams with ``finite $p$th moment'' \cite{Mileyko2011Probability}  when using $p$-Wasserstein distances.}.
In this section, we further explore the relation between Prokhorov and Wasserstein distances.

Similarly to the proofs in \cite{gibbs} for the measure-theoretic variants, we can bound our metric in terms of the Wasserstein distance.
As we will explain, the metrics $\pi_{t\mapsto t^q}$ are of special interest.
\begin{prop}\label{prop:ProkhorovLeWasserstein}
Let $p\ge 1, q\geq0, c>0$ and $f(t) = c \cdot t^q$. For two persistence diagrams $X,Y$ we have
\[
    \pi_f(X,Y) \le W_p(X,Y)^{\frac{p}{p+q}}\cdot c^{\frac{-1}{p+q}}.
\]
\end{prop}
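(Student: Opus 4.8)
The plan is to combine the Chebychev-type bound on the bottleneck profile (Lemma~\ref{lem:DWasserstein}) with the definition of $\pi_f$, and then optimize over the free parameter. Lemma~\ref{lem:DWasserstein} gives, for every $t>0$,
\[
    D_{X,Y}(t) \le \frac{1}{t^p} W_p(X,Y)^p.
\]
By definition, $\pi_f(X,Y) = \inf\{t>0 \colon D_{X,Y}(t) < f(t)\}$, so it suffices to exhibit \emph{one} value $t_0>0$ with $D_{X,Y}(t_0) < f(t_0) = c\, t_0^q$; then $\pi_f(X,Y) \le t_0$. Using the Chebychev bound, this will be guaranteed as soon as
\[
    \frac{1}{t_0^p} W_p(X,Y)^p < c\, t_0^q,
    \qquad\text{i.e.}\qquad
    W_p(X,Y)^p < c\, t_0^{p+q}.
\]

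So the first step is to solve this inequality for $t_0$: it holds precisely when $t_0 > \bigl(W_p(X,Y)^p / c\bigr)^{1/(p+q)} = W_p(X,Y)^{p/(p+q)} c^{-1/(p+q)}$. Taking the infimum over all such admissible $t_0$ (equivalently, letting $t_0$ decrease to this threshold) yields exactly
\[
    \pi_f(X,Y) \le W_p(X,Y)^{\frac{p}{p+q}} \cdot c^{\frac{-1}{p+q}},
\]
which is the claimed bound. I should be slightly careful with the strict-versus-non-strict inequality: for any $t_0$ strictly above the threshold the Chebychev estimate gives the strict inequality $D_{X,Y}(t_0) < f(t_0)$ needed to put $t_0$ in the defining set of $\pi_f$, so the infimum of that set is at most the threshold value, which is what we want. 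The edge case $W_p(X,Y) = 0$ forces $D_{X,Y}(t)=0<f(t)$ for all $t>0$, so $\pi_f(X,Y)=0$ and the inequality holds trivially (with the right-hand side also $0$); the case $W_p(X,Y)=\infty$ makes the right-hand side $\infty$ and there is nothing to prove.

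I do not expect a genuine obstacle here — the argument is essentially a one-variable optimization of the Chebychev bound against the power function $f$. The only point requiring a little care is the bookkeeping of strict inequalities and the degenerate values of $W_p$, as noted above; everything else is the routine substitution and the monotonicity of $t \mapsto c\,t^{p+q} - W_p(X,Y)^p$.
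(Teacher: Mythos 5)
Your proposal is correct and follows essentially the same route as the paper: apply the Chebychev-type bound of Lemma~\ref{lem:DWasserstein} and balance it against $f(t)=c\,t^q$ at the threshold $t=W_p(X,Y)^{p/(p+q)}c^{-1/(p+q)}$. If anything, your handling of the strict inequality (taking $t_0$ strictly above the threshold and then passing to the infimum) is slightly more careful than the paper's, which plugs in the threshold value directly and only obtains $D_{X,Y}(t)\le f(t)$ there.
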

\begin{proof}
Recall from Lemma \ref{lem:DWasserstein} that 
\[
    D_{X,Y}(t) = \inf\limits_{\eta}\lvert\{x \colon d(x,\eta(x)) >t\}\rvert \leq \frac{1}{t^p}W_p(X,Y)^p.
\]
We now want to find a suitable value of $t$ such that $D_{X,Y}(t)<c \cdot t^q$ to infer that $\pi_f(X,Y)\leq t$.
Plugging in $t = W_p(X,Y)^{\frac{p}{p+q}} \cdot c^{\frac{-1}{p+q}}$, one obtains
\begin{align*}
    \inf\limits_{\eta}\lvert\{x \colon d(x,\eta(x)) >W_p(X,Y)^{\frac{p}{p+q}}\cdot c^{\frac{-1}{p+q}}\}\rvert 
        &\leq \frac{W_p(X,Y)^p}{W_p(X,Y)^{\frac{p^2}{p+q}}\cdot c^{\frac{-p}{p+q}}}
\end{align*}
Now if $q=0$, the right hand side simplifies to $c = f(W_p(X,Y)\cdot c^{-1/p})$.
If $q>0$, we compute
\begin{align*}
    \frac{W_p(X,Y)^p}{W_p(X,Y)^{\frac{p^2}{p+q}}\cdot c^{\frac{-p}{p+q}}}
    & = W_p(X,Y)^{\frac{p^2+pq}{p+q}-\frac{p^2}{p+q}} \cdot c^{\frac{p+q-q}{p+q}}\\
        &= c\cdot\left(W_p(X,Y)^{\frac{p}{p+q}}\cdot c^{\frac{-1}{p+q}}\right)^q.
\end{align*}
Thereforee,
\begin{align*}
    \inf_\eta \lvert\{x \colon d(x,\eta(x)) >W_p(X,Y)^{\frac{p}{p+q}}\cdot c^{\frac{-1}{p+q}}\}\rvert 
    &\leq c\cdot\left(W_p(X,Y)^{\frac{p}{p+q}}\cdot c^{\frac{-1}{p+q}}\right)^q\\
    &= f\left(W_p(X,Y)^{\frac{p}{p+q}}\cdot c^{\frac{-1}{p+q}}\right)
\end{align*}
and we conclude $\pi_f(X,Y) \le W_p(X,Y)^{\frac{p}{p+q}} \cdot c^{\frac{-1}{p+q}}$ as desired.
\end{proof}
\begin{cor}
Let $p\geq 1$, $q\geq 0$ and $c>0$. 
The map $\id \colon (Dgm, W_p) \to (Dgm, \pi_{c\cdot t^q})$ is continuous.
\end{cor}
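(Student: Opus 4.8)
The plan is to deduce the corollary directly from Proposition~\ref{prop:ProkhorovLeWasserstein} together with Lemma~\ref{lem:MonotoneDecr} and the basic monotonicity properties of $\pi_f$. First I would recall the standard $\eps$--$\delta$ reformulation of continuity: fixing $X \in Dgm$, I must show that for every $\eps > 0$ there is a $\delta > 0$ so that $W_p(X,Y) < \delta$ forces $\pi_{t^q}(X,Y) < \eps$. Here $f(t) = t^q$, i.e.\ the constant $c = 1$, so Proposition~\ref{prop:ProkhorovLeWasserstein} specializes to $\pi_{t^q}(X,Y) \le W_p(X,Y)^{p/(p+q)}$.

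The core of the argument is then a one-line estimate: since $p \ge 1$ and $q > 0$, the exponent $\tfrac{p}{p+q}$ is a fixed positive number, so $s \mapsto s^{p/(p+q)}$ is a continuous, monotonically increasing function on $[0,\infty)$ vanishing at $0$. Hence given $\eps > 0$ I would set $\delta = \eps^{(p+q)/p}$; then $W_p(X,Y) < \delta$ yields
\[
\pi_{t^q}(X,Y) \le W_p(X,Y)^{\frac{p}{p+q}} < \delta^{\frac{p}{p+q}} = \eps.
\]
This $\delta$ does not depend on $X$ or $Y$, so in fact the map is uniformly continuous, not merely continuous; I would remark on this since it strengthens the statement at no cost. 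One should also note the degenerate case $W_p(X,Y) = \infty$: the bound from Proposition~\ref{prop:ProkhorovLeWasserstein} is vacuous there, but such a pair has $W_p$-distance $\infty \not< \delta$, so it never enters the continuity condition and causes no trouble.

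There is essentially no obstacle here --- the proposition does all the work, and the corollary is just the observation that a power with a fixed positive exponent is continuous and maps $0$ to $0$. The only point requiring a moment's care is making sure the direction of the inequality in Proposition~\ref{prop:ProkhorovLeWasserstein} (Prokhorov bounded by a power of Wasserstein) is the one that gives continuity of $\id\colon (Dgm, W_p) \to (Dgm, \pi_{t^q})$ rather than the reverse map; it is, because small $W_p$-distance then forces small $\pi_{t^q}$-distance, which is exactly continuity of the identity in that direction. I would keep the write-up to two or three sentences, stating the choice of $\delta$, the chain of inequalities above, and the remark about uniform continuity.
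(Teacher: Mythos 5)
Your argument is correct and is exactly the intended one: the paper states this corollary without proof as an immediate consequence of Proposition~\ref{prop:ProkhorovLeWasserstein}, and your choice of $\delta = \eps^{(p+q)/p}$ together with the monotonicity of $s \mapsto s^{p/(p+q)}$ is the standard way to spell it out. The added observations about uniform continuity and the $W_p = \infty$ case are harmless and correct.
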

When comparing with the bottleneck distance, i.e. $p=\infty$ in the above setting, we can say even more:
\begin{prop}\label{prop:ProkhorovLeqBottleneck}
For all admissible $f$ and all persistence diagrams we have $\pi_f(X,Y) \leq W_\infty(X,Y)$.
\end{prop}
\begin{proof}
We recall by Lemma \ref{lem:DBottleneck},
\begin{align*}
    \inf_\eta\{x\colon d(x,\eta(x)) > W_\infty(X,Y)\} = 0 \leq f(W_\infty(X,Y)),
\end{align*}
    and therefore $\pi_f(X,Y)\leq W_\infty(X,Y)$.
\end{proof}
Specializing to $c=1$ and $p\in\{1,\infty\}$ or $q\in\{0,1\}$, we obtain:
\begin{cor}
The following inequalities hold:

    \bgroup
    \def\arraystretch{2}%
    \begin{tabular}{|l||*{3}{c|}}\hline
        \backslashbox{p}{q}
        &\makebox[3em]{$0$}&\makebox[3em]{$1$}&\makebox[3em]{$q$}\\\hline\hline
        $1$ & $d_B \le W_1$ & $\pi \le \sqrt{W_1}$ & $\pi_{t^q} \le W_1^{\frac{1}{1+q}}$\\\hline
        $\infty$ &$d_B\le d_B$& $\pi\le d_B$& $\pi_{t^q} \le d_B$\\\hline
        $p$ & $d_B \le W_p$ & $\pi \le W_p^{\frac{p}{p+1}}$ & $\pi_{t^q} \le W_p^{\frac{p}{p+q}}$\\\hline
    \end{tabular}
    \egroup
\end{cor}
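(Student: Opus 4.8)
The plan is to derive every cell of the table as a specialization of Proposition~\ref{prop:ProkhorovLeWasserstein} with $c=1$, once the boundary values of the parameter $q$ have been matched to the classical metrics. For $q=1$ nothing is needed: $\pi_{t\mapsto t}$ is by construction the Prokhorov distance $\pi=D_P$ of Definition~\ref{defn:discreteProkhorov}. For $q=0$ the relevant function is the constant $f\equiv 1$, so first I would record that $\pi_{t\mapsto t^0}=d_B$. Indeed, because the bottleneck profile $D_{X,Y}$ takes values in $\N\cup\{\infty\}$, the condition $D_{X,Y}(t)<1$ is the same as $D_{X,Y}(t)=0$, and hence $\pi_{t\mapsto t^0}(X,Y)=\inf\{t>0\colon D_{X,Y}(t)=0\}=W_\infty(X,Y)=d_B(X,Y)$ by Lemma~\ref{lem:DBottleneck}. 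Thus the three columns of the table are, respectively, $d_B$, $\pi$ and $\pi_q:=\pi_{t\mapsto t^q}$.

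Next I would read off the bounds. Proposition~\ref{prop:ProkhorovLeWasserstein} with $c=1$ reads $\pi_{t\mapsto t^q}(X,Y)\le W_p(X,Y)^{p/(p+q)}$ for all $p\ge 1$ and $q>0$; this is already the third column ($\pi_q\le W_p^{p/(p+q)}$, hence $\pi_q\le W_1^{1/(1+q)}$ for $p=1$), and substituting $q=1$ gives the second column ($\pi\le W_p^{p/(p+1)}$, hence $\pi\le W_1^{1/2}=\sqrt{W_1}$ for $p=1$). For the first column ($q=0$) the Proposition is not literally applicable since it assumes $q>0$, but the underlying estimate still works: by Lemma~\ref{lem:DWasserstein}, for every $t>W_p(X,Y)$ we have $D_{X,Y}(t)\le t^{-p}W_p(X,Y)^p<1$, so $D_{X,Y}(t)=0<1=f(t)$ and therefore $d_B(X,Y)=\pi_{t\mapsto t^0}(X,Y)\le t$; letting $t$ decrease to $W_p(X,Y)$ yields $d_B\le W_p$ (and $d_B\le W_1$ at $p=1$).

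Finally the row $p=\infty$. Rather than passing to a limit $p\to\infty$, I would argue directly from Lemma~\ref{lem:DBottleneck}: for $q>0$ the function $f(t)=t^q$ is strictly positive on $(0,\infty)$, while $D_{X,Y}(t)=0$ for every $t>W_\infty(X,Y)=d_B(X,Y)$, so $D_{X,Y}(t)=0<f(t)$ there and hence $\pi_q(X,Y)\le t$ for all such $t$; taking the infimum over $t>d_B$ gives $\pi_q\le d_B$, which is $\pi\le d_B$ at $q=1$, and the remaining cell $d_B\le d_B$ is a tautology. I do not expect a genuine obstacle in any of this; the only point that needs a moment of care is the strict inequality in the definition of $\pi_f$, which is why in the $q=0$ case one argues at parameters $t>W_p(X,Y)$ and only afterwards passes to the infimum, instead of trying to plug in $t=W_p(X,Y)$ directly.
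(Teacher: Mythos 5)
Your proof is correct and follows the same route the paper intends: the corollary appears there with no proof beyond the remark that it is obtained by specializing Proposition \ref{prop:ProkhorovLeWasserstein} to $c=1$ and the indicated values of $p$ and $q$. Your careful treatment of the boundary cases --- identifying the $q=0$ column with $f\equiv 1$ via integrality of the bottleneck profile (rather than the paper's looser ``$f=0$'' convention, which would make the defining set empty) and deriving the $p=\infty$ row directly from Lemma \ref{lem:DBottleneck} --- simply fills in details the paper leaves implicit.
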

In particular, the Bottleneck Stability Theorem \ref{thm:ChazalStability} implies stability for the new metrics by Proposition \ref{prop:ProkhorovLeqBottleneck}:
\begin{thm}\label{thm:ProkhorovStability}
Let $X,Y$ be finite metric spaces, fix some admissible function $f$ and $k \in \N$. 
Then we have
    \[
    \pi_f(\Dgm(PH_k(X)), \Dgm(PH_k(Y))) \le 2 d_{GH}(X,Y),
    \]
    where $d_{GH}$ is the Gromov-Hausdorff distance.
\end{thm}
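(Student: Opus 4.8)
The plan is to chain two results already available in the excerpt: the Bottleneck Stability Theorem of Chazal et al., which gives $W_\infty(\Dgm(PH_k(X)),\Dgm(PH_k(Y))) \le d_{GH}(X,Y)$, and the comparison of $\pi_f$ with the bottleneck distance. The key observation is that for any superadditive $f\colon \R_{\ge 0}\to \R_{\ge 0}$ one has $f \ge 0$ (indeed $f(t)\ge f(0)+f(t)$ forces $f(0)\le 0$, and non-negativity gives $f(0)=0$; more simply $f\ge 0$ holds by hypothesis), so by Lemma \ref{lem:MonotoneDecr} applied with $g=f$ and the constant function $0$ we get $\pi_f(X,Y) \le \pi_0(X,Y) = W_\infty(X,Y)$ for all persistence diagrams. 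This is exactly the inequality recorded in the first column ($q=0$ row) of the corollary table, $d_B \le d_B$, read in the other direction.

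Concretely, I would proceed as follows. First, recall that plugging $f=0$ into the definition of $\pi_f$ recovers the bottleneck distance $W_\infty$, as noted right after the definition of $\pi_f$. Second, invoke Lemma \ref{lem:MonotoneDecr}: since $0 \le f(t)$ for every $t \ge 0$ (superadditive functions into $\R_{\ge 0}$ are non-negative by assumption), we obtain $\pi_f(X,Y) \le \pi_0(X,Y) = W_\infty(X,Y)$ for arbitrary persistence diagrams $X,Y$. Third, specialize to $X = \Dgm(PH_k(X))$ and $Y = \Dgm(PH_k(Y))$ for finite metric spaces and apply the Bottleneck Stability Theorem to bound $W_\infty$ of these diagrams by $d_{GH}(X,Y)$. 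Concatenating,
\[
\pi_f(\Dgm(PH_k(X)),\Dgm(PH_k(Y))) \le W_\infty(\Dgm(PH_k(X)),\Dgm(PH_k(Y))) \le d_{GH}(X,Y),
\]
which is the claim.

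There is essentially no obstacle here; the statement is a formal corollary. The only thing to be slightly careful about is that $f$ is assumed superadditive (as in Lemma \ref{lem:TriangleIneq} and the pseudometric theorem), so that $\pi_f$ is genuinely a (pseudo)metric and the monotonicity lemma applies with the honest $0$ function; if one only wants the stated inequality one does not even need superadditivity, merely $f \ge 0$. I would state the hypotheses on $f$ consistently with the surrounding text (superadditive, hence non-negative and monotone) and keep the proof to the two-line chain above, perhaps remarking that this recovers and strengthens the bottleneck-distance stability result, since $\pi_f \le W_\infty$ means the new metrics are at least as stable.
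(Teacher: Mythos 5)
Your argument is correct and has the same overall architecture as the paper's: reduce the claim to the Bottleneck Stability Theorem by first establishing $\pi_f \le W_\infty$ on persistence diagrams. The difference lies in how that intermediate inequality is justified. The paper obtains it only for $f(t)=c\,t^q$, as the $p=\infty$ row of the corollary table derived from Proposition \ref{prop:ProkhorovLeWasserstein}; you instead invoke Lemma \ref{lem:MonotoneDecr} with the comparison $0\le f$, which is more elementary and covers arbitrary non-negative $f$ --- a genuine improvement, since the theorem is stated for general $\pi_f$. One caveat: your chain passes through the identity $\pi_0=W_\infty$, which the paper asserts but which is not literally true under the strict inequality in Definition \ref{defn:discreteProkhorov}, where
\[
\pi_0(X,Y)=\inf\{t>0\colon D_{X,Y}(t)<0\}=\inf\emptyset=\infty,
\]
so the bound $\pi_f\le\pi_0$ you extract from Lemma \ref{lem:MonotoneDecr} is vacuous as written. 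This is easily repaired without changing your strategy: by Lemma \ref{lem:DBottleneck}, $D_{X,Y}(t)=0<f(t)$ for every $t>W_\infty(X,Y)$, provided $f(t)>0$ for $t>0$ (the standing assumption in the paper's metric lemmas), whence $\pi_f(X,Y)\le W_\infty(X,Y)$ directly; concatenating with the Bottleneck Stability Theorem then gives the claim exactly as in your final display.
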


We can provide not only lower but also upper bounds for Wasserstein distances in terms of the Prokhorov distance.
\begin{prop}\label{prop:WassersteinLeProkhorov}
$W_q(X,Y)^q \leq \pi_{t^q}(X,Y)^q (\max(d(x,\eta(x)))^q+\vert \eta\vert ) $, where $\eta\colon X \to Y$ is any matching realizing $\pi_{t^q}(X,Y)$.
\end{prop}
\begin{proof}
For an arbitrary bijection $\eta \colon X \to Y$, consider any $t>0$ such that $\lvert\{d(x,\eta(x))>t\}\rvert \le t^q$.
We estimate:
\begin{align*}
    W_q(X,Y)^q &\leq \sum_{x} d(x,\eta(x))^q\\
    &=\sum_{d(x,\eta(x))>t}d(x,\eta(x))^q+\sum_{d(x,\eta(x))\le t}d(x,\eta(x))^q\\
    &\le \lvert\{d(x,\eta(x))>t\}\rvert \max(d(x,\eta(x)))^q+t^q \lvert\{d(x,\eta(x))\le t\}\rvert\\
    &= \lvert\{d(x,\eta(x))>t\}\rvert \max(d(x,\eta(x)))^q+t^q (\lvert \eta \rvert-\lvert\{d(x,\eta(x))>t\}\rvert)\\
    &= \lvert\{d(x,\eta(x))>t\}\rvert (\max(d(x,\eta(x)))^q-t^q) + t^q \lvert \eta \rvert\\
    &\le t^q \max(d(x,\eta(x)))^q -t^{2q} +t^q \lvert\eta\rvert
\end{align*}
Taking the infimum over all matchings and all such $t$ we obtain the desired inequality
\[
    W_q(X,Y)^q \le \pi_{t^q}(X,Y)^q(\max(d(x,\eta(x)))^q+\lvert \eta \rvert).
\]
\end{proof}
Combining the two inequalities from Propositions \ref{prop:ProkhorovLeWasserstein} and \ref{prop:WassersteinLeProkhorov}, we obtain a comparison for different Wasserstein metrics.
\begin{cor}\label{cor:WassersteinLeWasserstein}
$W_q(X,Y)^q \le W_p(X,Y)^{\frac{pq}{p+q}}(\max(d(x,\eta(x)))^q+\vert \eta\vert )$.
\end{cor}

\begin{rem}
Another inequality relating Wasserstein distances for different $p$ and $q$ ori{-}ginates from the H\"o{}lder inequality, given in \cite[Lemma 3.5]{atienzaStabilityPersistentEntropy2020a}:
For finite persistence diagrams $X$, $Y$ and real numbers $1\leq q < p  < \infty$, we have
\[
    W_q(X,Y) \leq \vert\eta\vert^{\frac{1}{q}-\frac{1}{p}} W_p(X,Y),
\]
where $\eta$ is the matching realizing $W_p(X,Y)$.
Our inequality above yields a lower exponent for $W_p(X,Y)$ at the cost of multiplying with the largest distance in the matching.
In particular, for $q=1, p=2$, our formula reads
\[W_1(X,Y) \le W_2(X,Y)^{\frac{2}{3}}(\max(d(x,\eta(x)))+\vert \eta\vert),\]
with $\eta$ realizing $\pi_{t^q}(X,Y)$, whereas the one of \cite{atienzaStabilityPersistentEntropy2020a} reads (with $\eta$ realizing $W_2(X,Y)$)
\[
    W_1(X,Y) \le W_2(X,Y) \vert \eta \vert^{\frac{1}{2}}.
\]
Depending on the size of $W_p(X,Y)$ relative to the size of $X$ and $Y$, our inequality can provide sharper bounds than the one of \cite{atienzaStabilityPersistentEntropy2020a}.
To investigate the size of $\max(d(x,\eta(x)))$ remains an interesting question for future work.
One possible application of such inequalities is that they allow to infer stability results for vectorizations with respect to $W_p$ for $p>1$ from the stability with respect to $W_1$.
Another use of Propositions \ref{prop:ProkhorovLeWasserstein} and \ref{prop:WassersteinLeProkhorov} is that the bounds they provide for Wasserstein distances are  easily computed, as we will see in Section \ref{sec:Algos} below.
\end{rem}

\subsection{Metric and Topological Properties}\label{subsec:MetricProperties}
Using the comparison with Wasserstein (Section \ref{sec:CompWWasserstein}) and the results from \cite{Mileyko2011Probability}, we address questions of convergence and separability. We run into similar issues as \cite[Theorems 4.20, 4.24, 4.25]{bubenikTopologicalSpacesPersistence2018}  and \cite[section 3]{blumbergRobustStatisticsHypothesis2014}.
In this section, we explicitly allow diagrams with a countably infite number of off-diagonal points under certain finiteness assumptions specified below.

\begin{thm}
Let $p \geq 1$.
The space of persistence diagrams with finite $p$th moment endowed with the $c \cdot t^q$-Prokhorov metric is separable.
\end{thm}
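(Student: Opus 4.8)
The plan is to exhibit an explicit countable dense set, mimicking the standard argument that $(\mathcal{P}(X), \pi)$ is separable when $X$ is separable, and the treatment of persistence diagrams with finite $p$th moment in \cite{Mileyko2011Probability} and \cite{bubenikTopologicalSpacesPersistence2018}. Let $\mathcal{D}_p$ denote the set of persistence diagrams with finite $p$th moment, i.e.\ diagrams $X$ with $\sum_{x\in X_0} \left(\frac{x_2-x_1}{2}\right)^p < \infty$ (equivalently $W_p(X,\emptyset)<\infty$). The candidate dense set is
\[
\mathcal{D}^{\mathbb{Q}} = \{\, X \colon X \text{ is a finite diagram, all off-diagonal points have rational coordinates} \,\},
\]
which is countable. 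It suffices to show that for every $X\in\mathcal{D}_p$ and every $\eps>0$ there is $X'\in\mathcal{D}^{\mathbb{Q}}$ with $\pi_f(X,X')<\eps$, where $f(t)=c\cdot t^q$.

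The key steps, in order: First I would use finiteness of the $p$th moment to truncate. Given $\eps>0$, since $\sum_{x\in X_0}\left(\frac{x_2-x_1}{2}\right)^p$ converges, only finitely many points of $X$ have persistence exceeding any fixed threshold; choose $\delta>0$ small enough that $f(\delta)\ge 1$ is false in the wrong direction — more precisely, choose $\delta<\eps$ with $f(\delta)=c\delta^q$, and let $X^{(1)}$ be $X$ with all points of persistence $\le\delta$ deleted. Matching each deleted point to the diagonal is a partial matching moving each such point by at most $\delta/2<\delta$, so $D_{X,X^{(1)}}(\delta)=0<f(\delta)$ once we also check there is no cheaper obstruction; hence $\pi_f(X,X^{(1)})\le\delta<\eps$. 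Now $X^{(1)}$ has finitely many off-diagonal points. Second, I would perturb each of these finitely many points to a nearby rational point: if each point moves by less than some $\rho$, the induced matching $\eta$ has $d(x,\eta(x))<\rho$ for all $x$, so $D_{X^{(1)},X'}(t)=0$ for all $t\ge\rho$, giving $\pi_f(X^{(1)},X')\le\rho$. Choosing $\rho<\eps$ and invoking the triangle inequality (Lemma \ref{lem:TriangleIneq}, valid since $f$ is superadditive) gives $\pi_f(X,X')<2\eps$, which is enough after rescaling $\eps$.

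I would also need to verify that $X^{(1)}$ and $X'$ genuinely lie in $\mathcal{D}_p$ (immediate, since deleting points and perturbing finitely many points preserves finite $p$th moment) and that the ambient space in the theorem is indeed $\mathcal{D}_p$ rather than all of $Dgm$ — the hypothesis "finite $p$th moment" is exactly what rules out the pathological non-separable behaviour. One subtlety to address carefully is that $\pi_f$ with $f(0)=0$ and $f(t)>0$ for $t>0$ is only a genuine metric on $\bar{\mathcal B}$ by Lemma \ref{lem:GenPrDefininite}, and $\mathcal{D}_p\subset\bar{\mathcal B}$, so density in $\pi_f$ is meaningful.

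The main obstacle I expect is the truncation estimate: showing that deleting all low-persistence points changes $\pi_f$ by less than $\eps$ requires controlling not just the count of deleted points but the $\pi_f$-cost, and $\pi_f$ is governed by the bottleneck profile $D$, which counts points moved \emph{more} than $t$ rather than a sum. The clean way is: after deletion the optimal matching sending every deleted point straight down to the diagonal moves no point by more than $\delta/2$, so $D_{X,X^{(1)}}(t)=0$ for all $t\ge\delta/2$; thus $D_{X,X^{(1)}}(t)<f(t)$ holds in particular at $t=\delta/2$ (since $f(\delta/2)>0$), giving $\pi_f(X,X^{(1)})\le\delta/2$. This sidesteps any need to sum persistences and makes the argument essentially the same finite-approximation argument as in \cite{Mileyko2011Probability}; the finite $p$th moment hypothesis is used only to guarantee finitely many points survive truncation.
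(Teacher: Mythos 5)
Your proof is correct, but it takes a genuinely different route from the paper. The paper's proof is a two-line reduction: it quotes separability of the space of finite-$p$th-moment diagrams under $W_p$ (\cite{Mileyko2011Probability}, Theorem 12) and pushes the same countable dense set through the comparison inequality $\pi_f(X,Y)\le W_p(X,Y)^{p/(p+q)}c^{-1/(p+q)}$ of Proposition \ref{prop:ProkhorovLeWasserstein}. You instead build an explicit countable dense set (finite diagrams with rational off-diagonal points) by truncating low-persistence points and perturbing the finitely many survivors, which is self-contained and arguably more informative: your argument only uses that every $\delta>0$ threshold leaves finitely many points, i.e.\ membership in $\Bar{\mathcal{B}}$, so it in fact proves separability on the strictly larger space $\Bar{\mathcal{B}}\supset \mathcal{D}_p$ (a fact the paper only recovers later, via bottleneck separability, in the polishness theorem). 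Two small refinements: the constant $\delta/2$ for the distance to the diagonal is specific to the $\ell^\infty$ ground metric (for a general $\|\cdot\|_p$ it is a different constant multiple of the persistence, which changes nothing); and you can avoid invoking the triangle inequality of Lemma \ref{lem:TriangleIneq} (and hence the superadditivity hypothesis, which for $f(t)=ct^q$ forces $q\ge 1$) by composing the two matchings into a single bijection $X\to X'$ that moves every point by at most $\max(\delta/2,\rho)$, giving $D_{X,X'}(t)=0$ for $t\ge\max(\delta/2,\rho)$ and hence $\pi_f(X,X')\le\max(\delta/2,\rho)<\eps$ directly, valid for every $q>0$ exactly as in the paper's version.
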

\begin{proof}
Let $\eps>0$, $X$ a persistence diagram and $p\ge 1$.
Let $S$ be a countable dense subset for the $p$-Wasserstein metric; this exists by \cite[Theorem 12]{Mileyko2011Probability}.
In fact they show that we can take $S$ to be the set of finite diagrams whose points have rational coordinates.
Let $X_S\in S$ be a persistence diagram such that $W_p(X, X_S) < \eps^{\frac{p+q}{p}}\cdot c^{\frac{1}{p}}$.
Then by Proposition \ref{prop:ProkhorovLeWasserstein}, we have
\[
\pi_{c\cdot t^q}(X,X_S)\le W_p(X, X_S)^{\frac{p}{p+q}} \cdot c^{\frac{-1}{p+q}} < \eps^{\frac{p}{p+q}\frac{p+q}{p}} \cdot c^{\frac{-1}{p+q}} \cdot  c^{\frac{1}{p+q}} = \eps.
\]
\end{proof}

Note that the assumptions in the previous Theorem are weaker than the ones usually considered for the bottleneck distance, compare \cite[Theorem 4.18]{bubenikTopologicalSpacesPersistence2018}.

Recall that $\Bar{\mathcal{B}}$ denotes the persistence diagrams which for all $\eps>0$ have finitely many points of persistence $>\eps$.
The next Theorem is a consequence of \cite[Theorem 3.5]{blumbergRobustStatisticsHypothesis2014}, which asserts that the bottleneck distance makes $\Bar{\mathcal{B}}$ into a Polish space.

\begin{thm}
The space $\Bar{\mathcal{B}}$ endowed with the Prokhorov metric $\pi_f$ is Polish for all admissible $f$.
\end{thm}
\begin{proof}
Let $(X_n) \subset \Bar{\mathcal{B}}$ be a Cauchy sequence with respect to the Prokhorov metric $\pi_f$.
Let $\eps>0$ such that $f(\eps)\leq 1$.
Then the inequality $\pi_f(X_m,X_n)<\eps$ implies by definition of $\pi_f$ that
\[
    D_{X_m, X_n}(\eps)<f(\eps)\leq  1.
\]
As the bottleneck profile takes values in the integers, we conclude that $D_{X_m, X_n}(\eps) = 0$ and hence, by Lemma \ref{lem:DBottleneck}, we have $\eps \geq W_\infty(X_m,X_n)$.
In particular, $X_n$ is a Cauchy sequence with respect to the bottleneck distance.
By completeness of $\Bar{\mathcal{B}}$ with the bottleneck distance, there is a limit diagram $X\in \Bar{\mathcal{B}}$ to which the sequence converges.
Finally by Lemma \ref{prop:ProkhorovLeqBottleneck}, convergence in bottleneck implies convergence in Prokhorov.

Now for separability, consider a subset $A\subset\Bar{\mathcal{B}}$ which is dense with respect to the bottleneck distance.
Let $X\in \Bar{\mathcal{B}}$ and $\eps>0$. 
Then by assumption, there is $Y\in A$ with $W_\infty(X,Y)<\eps$.
Then, since by Proposition \ref{prop:ProkhorovLeqBottleneck} $ \pi_f(X,Y) \leq W_\infty(X,Y)$, we also have $\pi_f(X,Y) < \eps $.
Therefore, $A$ is dense in $\Bar{\mathcal{B}}$ with respect to $\pi_f$ as well.
\end{proof}

\subsection{Algorithms}\label{sec:Algos}
In this section, all persistence diagrams are finite.
Now we will provide an algorithm to compute $\pi_f(X,Y)$ for continuous monotonically increasing functions $f$.
In this case, there is always a single value $t_0\in [0,\infty)$ such that $D_{X,Y}(t)<f(t)$ for $t > t_0$ and $D_{X,Y}(t)>f(t)$ for $t<t_0$.
We can find its location by bisection.
Recall that we set $n = \vert X\vert + \vert Y \vert$.
\begin{prop}\label{prop:ProkhorovAlgo}
Let $f\colon [0,\infty)\to [0,\infty)$ be monotonically increasing. Assume that the values and preimages of $f$ can be computed in $O(1)$.
Then $\pi_f(X,Y)$ can be computed in $O(n^2 \log(n))$.
\end{prop}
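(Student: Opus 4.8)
The plan is to combine the bipartite-matching subroutine from Proposition \ref{prop:DRuntime}, which evaluates $D_{X,Y}(t)$ at a single $t$ in $O(n^2)$, with a binary search over $t$ to locate the unique crossing of $D_{X,Y}$ and $f$. The key observations are that $D_{X,Y}$ is a monotonically decreasing, integer-valued step function with at most $n = |X|+|Y|$ steps, while $f$ is monotonically increasing, so there is exactly one value $\pi_f(X,Y)$ at which the strict inequality $D_{X,Y}(t) < f(t)$ first holds; and that the breakpoints of $D_{X,Y}$ all occur at distances $d(x,y)$ for $x \in X_0 \cup Y_0'$, $y \in Y_0 \cup X_0'$, of which there are $O(n^2)$.

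First I would enumerate the candidate breakpoint set $T = \{\, d(x,y) : x \in U,\ y \in V \,\}$ (notation $U,V$ as in Section \ref{sec:BottleneckProfiles}), which has size $O(n^2)$, and sort it in $O(n^2 \log n)$. On this sorted array $D_{X,Y}$ is piecewise constant, so I can binary-search for the smallest $t \in T$ with $D_{X,Y}(t) < f(t)$: each probe evaluates $D_{X,Y}$ once via the Hopcroft--Karp/kD-tree matching of Proposition \ref{prop:DRuntime} in $O(n^2)$ and computes $f(t)$ in $O(1)$, and there are $O(\log(n^2)) = O(\log n)$ probes, for $O(n^2 \log n)$ total. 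Having bracketed the crossing between two consecutive breakpoints $t_{i} < t_{i+1}$, on the open interval $(t_i, t_{i+1})$ the profile is a constant integer $k = D_{X,Y}(t_{i+1})$; the defining condition $k < f(t)$ then reduces, by monotonicity of $f$, to $t > f^{-1}(k)$, so $\pi_f(X,Y)$ is obtained in $O(1)$ as $\max(t_i, f^{-1}(k))$ (with the appropriate edge handling when $f^{-1}(k) \le t_i$ or the crossing sits exactly at a breakpoint, and when $D_{X,Y}$ never drops below $f$ one returns $W_\infty(X,Y)$ via Lemma \ref{lem:DBottleneck}).

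The correctness argument is the routine part: monotonicity of $D_{X,Y}$ (the first Lemma of Section \ref{sec:BottleneckProfiles}) plus monotonicity of $f$ gives that $\{ t : D_{X,Y}(t) < f(t)\}$ is an up-set, the binary search is then valid, and the closed-form on each constant piece matches Definition of $\pi_f$. The step I expect to be the genuine bottleneck, both in the runtime and in the writing, is controlling that $O(n^2)$ candidate distances do not need to be examined individually beyond the sort: one must argue that any $t$ strictly between consecutive elements of $T$ yields the same value of $D_{X,Y}$ (so the optimal matching cannot change), which is where the geometric structure of the graph $G=(U\cup V,E)$ — edges appearing monotonically as $t$ grows — is used. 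Once that is in place, the $O(n^2)$ sort dominates a single matching call, $O(\log n)$ matching calls cost $O(n^2\log n)$, and the bound follows; the assumption that $f$ and $f^{-1}$ are $O(1)$ keeps the per-probe arithmetic negligible.
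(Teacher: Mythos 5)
Your proposal is correct and follows essentially the same route as the paper: sort the $O(n^2)$ candidate values, then binary-search for the crossing of $D_{X,Y}$ with $f$, evaluating $D_{X,Y}(t)$ at each probe via the matching algorithm of Proposition~\ref{prop:DRuntime} in $O(n^2)$, for a total of $O(n^2\log n)$. The only (cosmetic) difference is that the paper folds the values $f^{-1}(\N_{\le |X|+|Y|})$ into the sorted candidate set up front, whereas you recover the possible crossing at $f^{-1}(k)$ inside a bracketed interval by a final $O(1)$ step; both handle the same case and yield the same bound.
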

\begin{proof}
First, observe that the Prokohorv distance takes its value among the pairwise distances of points in the persistence diagrams (if $f$ crosses the bottleneck profile at one of its vertical gaps) or among preimages of integers under $f$ (if $f$ crosses the bottleneck profiles at one of its constant pieces), in formulas 
\[
\pi_f(X,Y) \in \{d(x,y)\colon x\in X, y\in Y\} \cup f^{-1}(\N_{\le \vert X\vert +\vert Y\vert }) =\colon T_1.
\]
To perform a binary search, we sort the elements in $T_1$ as a preprocess, which has runtime complexity $O(n^2 \log(n))$.
In each iteration of the binary search we pick the median $t\in T_i$.
Next we compute the value of the bottleneck profile $D_{X,Y}(t)$ using Proposition \ref{prop:DRuntime}, taking $O(n^2)$.
Then we compute $f(t)$, which by assumption takes $O(1)$.
Now if $D_{X,Y}(t)>f(t)$ set $T_{i+1}$ to be the right half, if $D_{X,Y}(t)\le f(t)$ set $T_{i+1}$ to be the left half of $T_i$.
Hence we obtain a runtime of $O(n^2 \log{n})$ for the binary search as well.
\begin{algorithm}[H]
 \caption{The binary search to compute $\pi_f(X,Y)$}
\begin{algorithmic}[1]
\Require{Persistence diagrayms $X,Y$; function $f$}
\Ensure{$\pi_f(X,Y)$ }
 \State $T = \{d(x,y)\colon x\in X, y\in Y\} \cup f^{-1}(\N_{\le \vert X\vert+\vert Y\vert })$
 \State sort $T$
 \State $L = 0; R = \text{length}(T)$
 \While{$L<R$}
    \State $m = \lfloor \frac{R+L}{2}\rfloor$
    \State $t = T[m]$
    \If{$D_{X,Y}(t)>f(t)$}
        \State$L=m+1$
    \Else
        \State $R=m$
    \EndIf
 \EndWhile\\
 \Return{$T[L]$}
 \end{algorithmic}
\end{algorithm}
\end{proof}
In particular, if one uses a more efficient geometric data structure to improve the runtime of the matching algoritthm, the sorting preprocessing dominates the runtime.
Compare \cite{Efrat2001Geometry}, Theorem 3.2 and the preceding discussion therein for more details and possible improvements of the runtime complexity.
Please refer to Section \ref{sec:implementation} for details about our implementation and its availability.

There is an easy modification to the above algorithm to approximate $\pi_f$ up to an additive error of $\eps$.
Instead of performing the binary search on the indicated discrete set (which needs to be sorted or otherwise pre-processed in a costly way, as noted), one can run it on an interval $[0,M]$.
Here, $M$ is some upper bound, for example the sum of the longest lifespans of points in $X$ and $Y$ respectively (which is computed in $O(n)$).
We bisect the interval until we arrive at one of length less than $2\eps$.
Its midpoint is guaranteed to be less than $\eps$ away from the true value of $\pi_f(X,Y)$.

\section{Experiments}\label{subsec:Experiments}
A simple application of the bottleneck profile, based on simple synthetic persistence diagrams, was already presented in Example \ref{exmp:bottleneckprofile}. 

\subsection{Highlighting Geometric Intuition}\label{sec:three_circles}
This experiment is a toy example, showing how the Prokhorov distance can capture our geometric intuition more accurately than bottleneck or Wasserstein.
Consider three different shapes in $\mathbb{R}^2$: a) a big circle ($r=6$), b) a big ($r=6$) and a medium circle ($r=4$), c) a big ($r=6$), a medium ($r=4$) and small circle ($r=2$).
We take five samples with noise from each shape according to Table \ref{tab:3Circles}.
\begin{table}[!htbp]
    \centering
        \begin{adjustwidth}{-1.5in}{-1.5in}
    \centering
    \begin{tabular}{|c|c|c|c|c|c|}\hline
         shape  & number of circles & radii & samples & noise                             & colour in the figures \\ \hline
         a      & 1 & 6       & 120    & uniform from $[-0.2,0.2]^2$ & blue\\
         b      & 2 & 6, 4    & 300    & uniform from $[-0.23,0.23]^2$ & red\\
         c      & 3 & 6, 4, 2 & 120    & uniform from $[-0.2,0.2]^2$ & green\\ \hline
    \end{tabular}
    \caption{The three shapes: one two and three circles.}
    \label{tab:3Circles}
    \end{adjustwidth}
\end{table}

\begin{figure}[!htbp]
    \centering
    \includegraphics[width=0.8\textwidth]{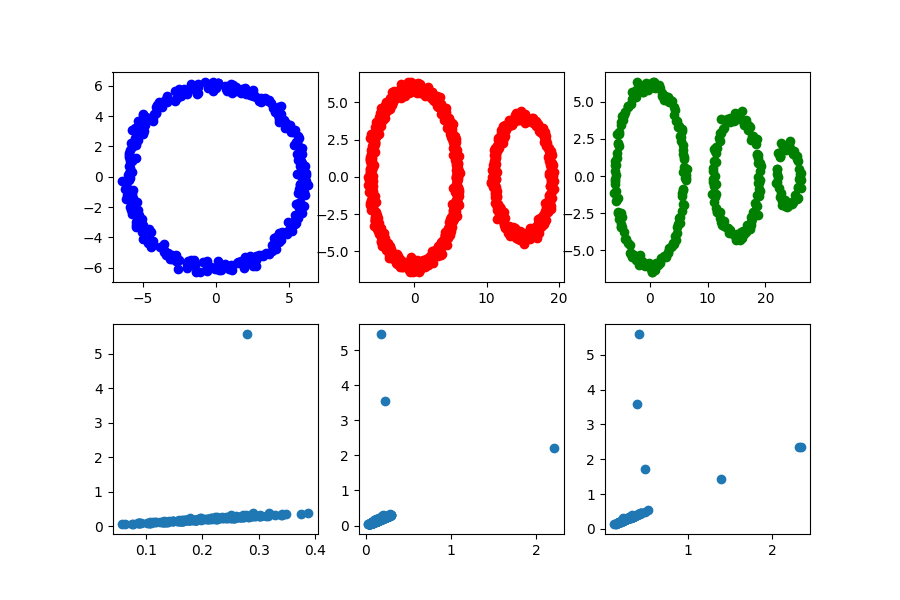}
    \caption{\label{fig:ThreeCircles}One two and three noisy circles and their PDs for the first persistent homology.}
    
\end{figure}

For each point cloud we compute the first persistent homology modules of it alpha complex filtration and represent them as PDs (see Figure~\ref{fig:ThreeCircles}).
We can look at the averaged $D$-function for each pair of shapes (Figure \ref{fig:DThreeCircles}).
After careful inspection of this figure and some trial and error, we come up with the choice of $f(t) = t^3\cdot 20^t$ to separate three bottleneck profiles in a most efficient way:
Between around $0.55$ and $0.65$, the averaged bottleneck profiles involving shape c) with the small circle decrease, while the one comparing a) and b) stays constant.
Intersecting with a function in this interval will provide a good choice for the Prokhorov distance:
It puts the two and three circles closest to each other and one and three circles the farthest apart.
In data science tasks, we will of course need an automated way to find a good parameter function $f$, we will discuss this in more detail below.

\begin{figure}[!htbp]
    \centering
    \includegraphics[width=0.8\textwidth]{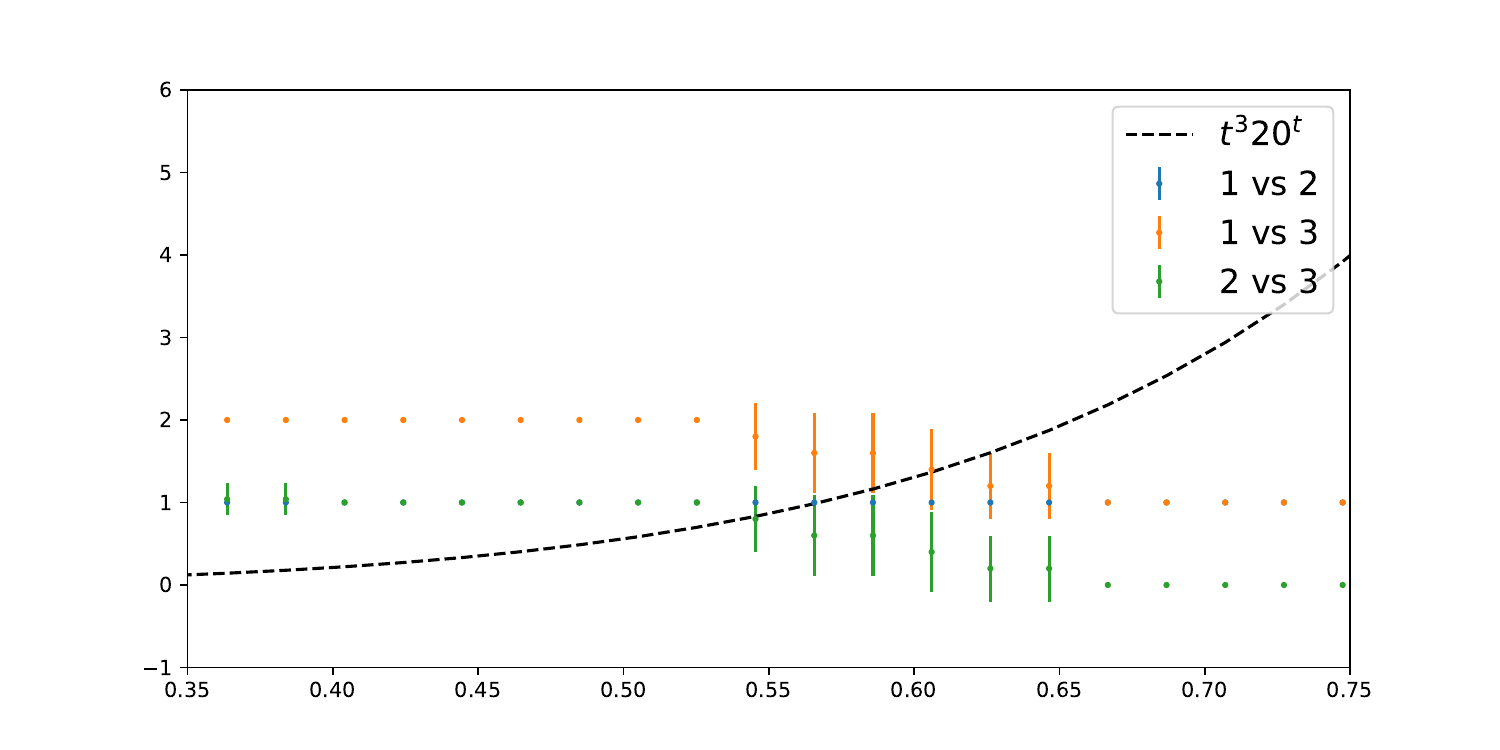}
    \caption{\label{fig:DThreeCircles}The averaged bottleneck profile for the three circles.}

\end{figure}

Now we want to compare the Bottleneck, Prokhorov and Wasserstein distances.

\begin{figure}[!htbp]
    \centering
    \begin{subfigure}[t]{0.3\textwidth}
	    \centering
	    \includegraphics[width=\textwidth]{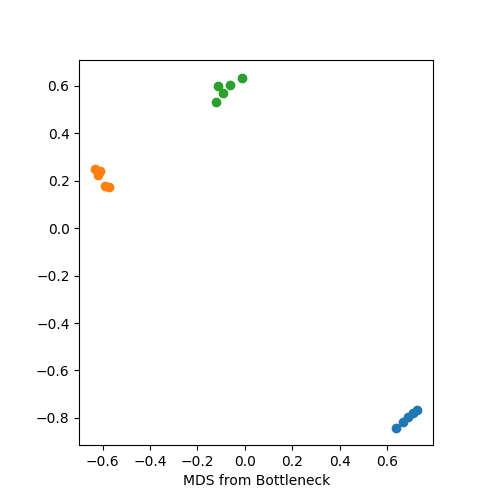}
	\end{subfigure} 
	\hfill
	\begin{subfigure}[t]{0.3\textwidth}
	    \centering
	    \includegraphics[width=\textwidth]{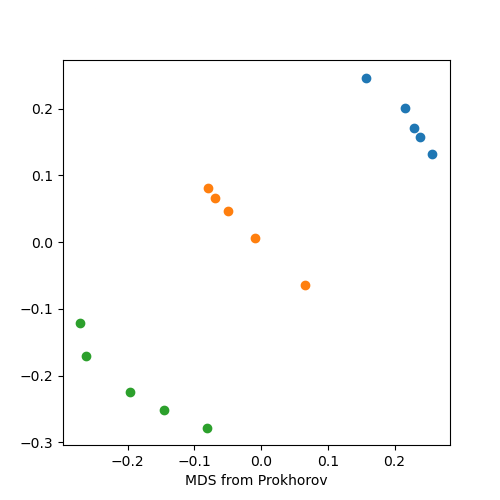}
	\end{subfigure}
	\hfill
	\begin{subfigure}[t]{0.3\textwidth}
	    \centering
	    \includegraphics[width=\textwidth]{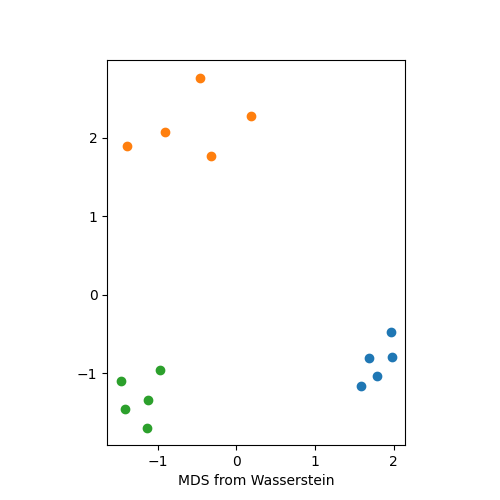}
	\end{subfigure}
	\caption{\label{fig:MDS}MDS plots of the dataset in Section~\ref{sec:three_circles}.}
	\vspace*{\floatsep}
    \includegraphics[width=\textwidth]{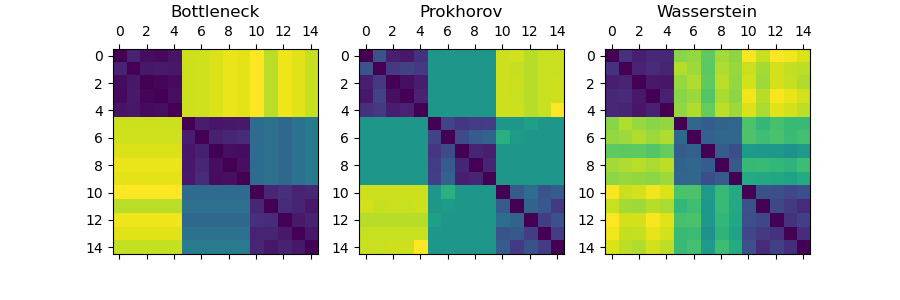}
    \label{fig:ThreeCirclesDMatrix}
    \caption{Distance matrices of the dataset in Section~\ref{sec:three_circles}.}
    
\end{figure}
The bottleneck distance between shapes a) and both b) and c) is roughly the same.
This distance does not take the presence of the additional small circle in shape c).
By blowing up the sample size and the noise in shape b), the Wasserstein distance from a) and c) to it are artificially blown up (Figures \ref{fig:MDS} and \ref{fig:ThreeCirclesDMatrix}).
The Prokhorov distance is built to avoid these pitfalls and nicely captures the geometry of the setting.
The MDS plot for Prokhorov agrees with our intuition and places b) between a) and c) (Figures \ref{fig:MDS}).

\subsection{Classification Experiments}
We now turn to more sophisticated data sets to illustrate the usage and advantages of the Prokhorov distance.
In particular, we consider persistence diagrams that actually arise in applications of TDA.
We use the library \cite{scikit-learn} for standard machine learning algorithms (in particular $K$-Neighbors). 
For the Bottleneck and Wasserstein metrics we use the Gudhi library \cite{gudhi:BottleneckDistance} and \cite{gudhi:PersistenceRepresentations}.
To score the different metrics, we use K-neighbors classification accuracy as well as classification accuracy based on K-Medoids clustering with the ``build'' initialization \cite{schubert_erichschuberttu-dortmundde_kmedoids_nodate}, \cite{schubert_fast_2020}.
In the latter case, points are assigned to the class of the medoid of their cluster.
We split the data sets into training and testing with $50\%$ of the points each.
All computations were carried out on a laptop with an Intel i5-8265U CPU with $1.60$ GHz and $8$ GB memory.
The code to reproduce the experiments is available online\footnote{ \url{https://github.com/nihell/ProkhorovExamples/blob/master/Experiments.ipynb}}.

\subsubsection*{Parameter tuning -- choosing $f$}
One needs to specify an admissible function $f$ as a parameter for the Prokhorov distance $\pi_f$.
The set off all such functions is vast, therefore it is sensible to restrict to a smaller subset.
In the experiments below, we choose $f$ from linear functions with integer slope $\in [10,100]$.
We do this by performing a grid search over the parameters and evaluating them by five-fold cross-validation.
By selecting this subset of parameters, we reduce the risk of overfitting and are able to run the parameter selection in reasonable time.
We leave it as a problem for further investigation to find better means to run the parameter selection, but note that the fact that the bottleneck profile is piecewise constant obstructs the use of gradient descent.

\subsubsection*{Prokhorov Distance for Cubical Complexes with Outlier Pixels}
We generate\footnote{Code available at \url{https://github.com/nihell/ProkhorovExamples/blob/master/GenerateCubicalNoise.ipynb}} $100 \times 100$ pixel greyscale images according to the following procedure, cf. Figure \ref{fig:CubicalComplex}.
Initializinng every pixel with $0$, we choose $n$ points at random, at which we add a Gau{\ss}ian with $\sigma = 3$.
We normalize the values to $[0,2]$ and then shift them up by $64$.
The goal is to distinguish images with $n=15$ from images with $n=20$.
The obstacle is that we superimpose a particular kind noise, similar to salt-and-pepper noise.
We choose $k$ pixels randomly at which we set the value to a random integer from $[1,128]$; the eight surrounding pixels are set to zero.
\begin{figure}
    \centering
    \begin{subfigure}[b]{0.3\textwidth}
    \centering
        \includegraphics[width=\textwidth]{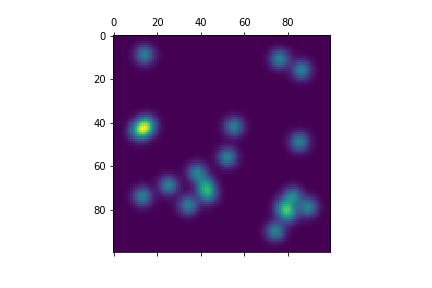}
    \end{subfigure}
    \begin{subfigure}[b]{0.3\textwidth}
        \centering
        \includegraphics[width=\textwidth]{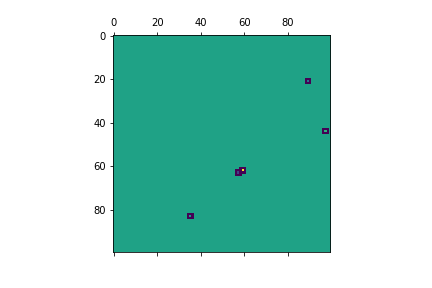}
    \end{subfigure}
    \begin{subfigure}[b]{0.3\textwidth}
        \centering
        \includegraphics[width=\textwidth]{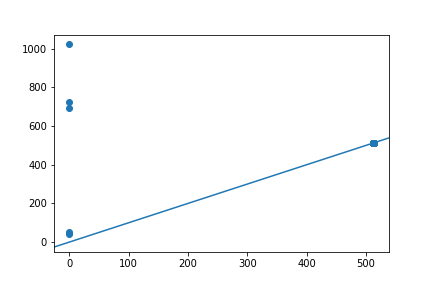}
    \end{subfigure}
    \caption{The underlying Gau\ss{}ians, the superimposed noise and the resulting persistence diagram\label{fig:CubicalComplex}}
\end{figure}
For each of the four combinations $n\in\{15,20\}$ and $k\in\{3,5\}$ we sample $50$ greyscale images. 
We then create a cubical complex from each using the pixels as top-dimensional cells (lower-star filtration) and compute persistent homology in dimensions $0$ and $1$.
We proceed as indicated at the beginning of this section to assess the accuracy of the different metrics.
The results are summarized in Table \ref{tab:CubicalNoiseResults}.
Both in dimmension 0 and 1, the K-Neighbors classifier is inconclusive in the setting of Bottleneck and Wasserstein.
With a suitable Prokhorov metric, we are able to achieve an accuracy of more than $80\%$.
In the K-Medoids approach, the story is similar but less pronounced:
Bottleneck and Wasserstein are inconclusive, but Prokhorov achieves around $60\%$ accuracy.

\begin{table}[!htbp]
    \centering
    \begin{adjustwidth}{-1.5in}{-1.5in}
    \centering
   \begin{tabular}{|c|c|c||c|c||c||c||c|}
    \hline
                              & dim                  &  $f(t)$               & Prokhorov & Bottleneck & 1-Wasserstein & 2-Wasserstein\\ \hline
    K neighbors training score& \multirow{4}{*}{$0$} & \multirow{3}{*}{$49t$} & $0.8425 $& $ 0.515 $& $ 0.58 $& $ 0.525 $\\
    K neighbors test score    &                      &                       & $0.8575 $& $ 0.485 $& $ 0.535 $& $ 0.4925 $\\
    computation time $[s]$    &                      &                       &  $ 26.69$ &$ 36.61 $& $ 44.56 $& $ 125.2 $\\ 
    parameter tuning time $[s]$&                      &  1059               &          &           &               &       \\    \hline
    K medoids training score&   \multirow{4}{*}{$0$} & \multirow{3}{*}{$18t$} & $0.5975 $& $ 0.5325 $& $ 0.515 $& $ 0.51 $\\
    K medoids test score    &                      &                       & $0.62 $& $ 0.51 $& $ 0.485 $& $ 0.5125 $\\
    computation time $[s]$    &                      &                       &  $ 70.14$ &$ 108.6 $& $ 127.6 $& $ 379.4 $\\ 
    parameter tuning time $[s]$&                      &  1082               &          &           &               &       \\    \hline
    
     K neighbors training score& \multirow{4}{*}{$1$} & \multirow{3}{*}{$92t$} & $0.8625 $& $ 0.5025 $& $ 0.545 $& $ 0.4825 $\\
    K neighbors test score    &                      &                       & $0.825 $& $ 0.5375 $& $ 0.575 $& $  0.495 $\\
    computation time $[s]$    &                      &                       &  $ 26.57$ &$ 36.68 $& $ 44.43 $& $ 125.9 $\\ 
    parameter tuning time $[s]$&                      &  1025               &          &           &               &       \\    \hline
    K medoids training score&   \multirow{4}{*}{$1$} & \multirow{3}{*}{$16t$} & $0.62 $& $ 0.4925 $& $ 0.49 $& $ 0.485 $\\
    K medoids test score    &                      &                       & $0.5975 $& $ 0.5125 $& $ 0.4975 $& $ 0.515 $\\
    computation time $[s]$    &                      &                       &  $ 77.08$ &$ 113.9 $& $ 132.6 $& $ 401.9 $\\ 
    parameter tuning time $[s]$&                      &  1098               &          &           &               &       \\    \hline

\end{tabular}
\end{adjustwidth}
\caption{Classification scores for the synthetic dataset.}
    \label{tab:CubicalNoiseResults}
\end{table}

\subsubsection*{3D Segmemtation}
We adapt an example from \cite{carriereSKlearnTDA} and \cite{gudhi:PersistenceRepresentations}, which is based on the dataset \cite{Chen:2009:ABF}.
The task is to classify 3D-meshes based on the persistence diagrams of certain functions defined on them.
The shapes are for example airplanes, hands, chairs ...
The results of classification are presented in the Tables \ref{tab:3dSegResults}.
All the considered metrics yield a similar accuracy.
Prokhorov is the fastest, however at the cost of first having to find the suitable parameter, which took moore than ten hours in this case.

\begin{table}[!htbp]
    \centering
    \begin{adjustwidth}{-1.5in}{-1.5in}
    \centering
   \begin{tabular}{|c|c||c|c||c||c||c|}
    \hline
                              &   $f(t)$               & Prokhorov & Bottleneck & 1-Wasserstein & 2-Wasserstein\\ \hline
    K neighbors training score&  \multirow{3}{*}{$10t$} & $0.9101 $& $ 0.9098 $& $ 0.9042 $& $ 0.9059 $\\
    K neighbors test score    &                        & $0.9270 $& $ 0.9245 $& $ 0.9312 $& $ 0.9298 $\\
    computation time $[s]$    &                        &  $ 795.2$ &$ 1252 $& $ 838.5 $& $ 1740 $\\ 
    parameter tuning time $[s]$&   40440               &          &           &               &       \\    \hline
    K medoids training score&    \multirow{3}{*}{$13t$} & $0.4792 $& $ 0.4905 $& $ 0.4021 $& $ 0.4592 $\\
    K medoids test score    &                         & $0.4985 $& $ 0.4891 $& $ 0.4126 $& $ 0.5125 $\\
    computation time $[s]$    &                       &  $ 1946$ &$ 3467 $& $ 2057 $& $ 5009 $\\ 
    parameter tuning time $[s]$&   41715               &          &           &               &       \\    \hline

\end{tabular}
\end{adjustwidth}
\caption{Classification scores for the 3d segmentation dataset.}
    \label{tab:3dSegResults}
\end{table}

\subsubsection*{Synthetic Dataset}
Finally, we consider the dataset introduced by \cite[Section 6.1]{Adams2017Persistence}.
It contains six shape classes: A sphere, a torus, clusters, clusters within clusters, a circle and the unit cube.
From each class take 25 samples of 500 points.
Then add two levels of Gaussian noise ($\eta = 0.05, 0.1$) and the zeroth and first persistent homology of the Vietoris-Rips filtration are computed.
We compute the distance matrices and evaluate them based on the $K$-neighbors and $K$-medoids classifiers.
The results are displayed in Table \ref{tab:SynthDatasetResults}.
We find that Prokhorov performs better Bottleneck and only slightly worse than Wasserstein.
Prokhorov takes at most similarly long as 1-Wasserstein; Bottleneck is faster and 2-Wasserstein is slower.
\begin{table}[!htbp]
    \centering
    \begin{adjustwidth}{-1.5in}{-1.5in}
    \centering
    \begin{tabular}{|c|c|c||c|c||c||c||c|}
    \hline
                              & dim                  & noise                  &              $f(t)$& Prokhorov & Bottleneck & 1-Wasserstein & 2-Wasserstein\\ \hline
    K neighbors training score& \multirow{4}{*}{$0$} & \multirow{4}{*}{$0.05$}& \multirow{3}{*}{$42t$}& $0.9067 $& $ 0.8133 $& $ 1.0 $& $ 0.9867 $\\
    K neighbors test score    &                      &                        &                       & $0.84 $& $ 0.7867 $& $ 0.96 $& $ 0.9467 $\\
    computation time $[s]$    &                      &                        &                       &  $ 144.2$ &$ 45.39 $& $ 252.8 $& $ 1063 $\\ 
    parameter tuning time $[s]$&                      &                       &  4218                 &          &           &               &       \\    \hline
    K medoids training score&   \multirow{4}{*}{$0$} & \multirow{4}{*}{$0.05$}&\multirow{3}{*}{$93t$} & $0.8 $& $ 0.68 $& $ 0.9733 $& $ 0.8933 $\\
    K medoids test score    &                      &                          &                       & $0.9067 $& $ 0.6 $& $ 0.88 $& $ 0.8933 $\\
    computation time $[s]$    &                      &                        &                       & $ 465.3$ &$ 156.1 $& $  801.3 $& $ 3207 $\\ 
    parameter tuning time $[s]$&                      &                       &4507                   &          &           &               &       \\    \hline
    
    K neighbors training score& \multirow{4}{*}{$0$} & \multirow{4}{*}{$0.1$}& \multirow{3}{*}{$87t$} & $0.9733 $& $ 0.7867 $& $ 0.9867 $& $ 0.9867 $\\
    K neighbors test score    &                      &                        &                       & $1.0 $& $ 0.7467 $& $ 1.0 $& $ 1.0 $\\
    computation time $[s]$    &                      &                        &                       &  $ 145.8$ &$ 44.22 $& $ 267.0 $& $ 1081 $\\ 
    parameter tuning time $[s]$&                      &                       &  4267                 &          &           &               &       \\    \hline
    K medoids training score&   \multirow{4}{*}{$0$} & \multirow{4}{*}{$0.1$}&\multirow{3}{*}{$95t$}  & $0.8 $& $  0.6 $& $ 0.9867 $& $  0.96 $\\
    K medoids test score    &                      &                          &                       & $0.9067 $& $ 0.56 $& $ 0.96 $& $ 0.9733 $\\
    computation time $[s]$    &                      &                        &                       & $ 465.3$ &$ 161.0 $& $  791.4 $& $ 3195 $\\ 
    parameter tuning time $[s]$&                      &                       &4850                   &          &           &               &       \\    \hline
    
    K neighbors training score& \multirow{4}{*}{$1$} & \multirow{4}{*}{$0.05$}& \multirow{3}{*}{$51t$}& $0.9733 $& $ 0.92 $& $ 1.0 $& $ 1.0 $\\
    K neighbors test score    &                      &                        &                       & $0.96 $& $ 0.9333 $& $ 1.0 $& $ 1.0 $\\
    computation time $[s]$    &                      &                        &                       &  $ 24.97 $ &$ 22.82 $& $ 23.77 $& $ 118.5 $\\ 
    parameter tuning time $[s]$&                      &                       &  736.2                 &          &           &               &       \\    \hline
    K medoids training score&   \multirow{4}{*}{$1$} & \multirow{4}{*}{$0.05$}&\multirow{3}{*}{$98t$} & $0.8 $& $0.7867 $& $ 1.0 $& $ 1.0 $\\
    K medoids test score    &                      &                          &                       & $0.8667 $& $ 0.8267 $& $ 1.0 $& $ 1.0 $\\
    computation time $[s]$    &                      &                        &                       & $ 77.63$ &$ 76.08 $& $  72.20 $& $ 366.7 $\\ 
    parameter tuning time $[s]$&                      &                       &779.6                   &          &           &               &       \\    \hline
    
     K neighbors training score& \multirow{4}{*}{$1$} & \multirow{4}{*}{$0.1$}&\multirow{3}{*}{$61t$} & $0.9333 $& $ 0.92 $& $ 0.92 $& $ 0.9333 $\\
    K neighbors test score    &                      &                        &                       & $0.9467 $& $ 0.93333 $& $ 0.9867 $& $  0.9867 $\\
    computation time $[s]$    &                      &                        &                       &  $ 28.17$ &$ 22.28 $& $ 26.98 $& $ 138.4 $\\ 
    parameter tuning time $[s]$&                      &                       & 809.1               &          &           &               &       \\    \hline
    K medoids training score&   \multirow{4}{*}{$1$} & \multirow{4}{*}{$0.1$} &\multirow{3}{*}{$50t$} & $0.88 $& $  0.6933 $& $ 0.8133 $& $ 0.8133 $\\
    K medoids test score    &                      &                          &                       & $0.8133 $& $ 0.7067 $& $ 0.8533 $& $ 0.8533 $\\
    computation time $[s]$    &                      &                        &                       &  $ 88.91$ &$ 75.50 $& $ 80.01 $& $ 413.8 $\\ 
    parameter tuning time $[s]$&                      &                       & 832.2               &          &           &               &       \\    \hline

\end{tabular}
\end{adjustwidth}
\caption{Classification scores for the synthetic dataset from \cite{Adams2017Persistence}.}
    \label{tab:SynthDatasetResults}
\end{table}

\subsection{Discussion}
First and foremost, we found that Prokhorov is able to produce good results in situations where the classical tools of Bottleneck and Wasserstein fail.

In order to explain the differences in the computation time, we note the size of the persistence diagrams in the various settings:

\begin{table}[!htbp]
    \centering
    \begin{adjustwidth}{-1.5in}{-1.5in}
    \centering
   \begin{tabular}{|c|c|c|c|c|c|}
    \hline
     & 3D-Segmentation & Synthetic data  & Synthetic data  & Synthetic data & Synthetic data\\
     & & $H_0$, $\eta = 0.05$ & $H_0$, $\eta = 0.1$ &  $H_1$, $\eta = 0.05$ & $H_1$, $\eta = 0.1$ \\\hline
     Mean size          & $ 11.84 $ &$ 500 $& $ 500 $& $ 177.7 $& $ 189.9 $\\     \hline
     standard deviation & $ 4.893 $ & $ 0 $& $ 0 $& $ 40.53 $& $ 38.84 $\\     \hline
\end{tabular}
\end{adjustwidth}
\caption{Cardinalities of the persistence diagrams for the considered experiments.}
    \label{tab:PDsizes}
\end{table}

By inspecting Table~\ref{tab:PDsizes} wee see that the 3D segmentation dataset contains way smaller diagrams, on which the Prokhorov metric seems to perform well, both in terms of runtime and score.
On the bigger diagrams from the synthetic dataset, the Wasserstein metrics yield the highest scores.
Prokhorov outperforms Bottleneck in the scores at the cost of higher runtimes.
The difference in the computation time is caused by the evaluation of $f(t)$, which is the only difference between the Bottleneck and Prokhorov implementations.

Bottleneck -- and to some extend also Prokhorov -- work less well on zero-dimensional PDs.
There, every class is born at time zero, hence the PD is intrinsically one-dimensional and points are matched in linear order. 
The bottleneck distance is less meaningful in this setting.
Moreover, the Prokhorov (and even more the Bottleneck) distance do not take points matched over a small distance into account.
This is a consequence of being designed to be robust against noise.
However, this data can actually contain meaningful information, which is picked up by the Wasserstein distances.
This is a possible explanation for the fact that Wasserstein yields better scores in the synthetic dataset.

Hence, the Prokhorov metric works best on rather small diagrams and runs fastest with simple (e. g. linear) parameter functions $f$.
Even then, one needs to take the additional time for tuning the parameter $f$ into account.

\section{Discussion and Outlook}
Summarizing the results from the previous section, we find that the Prokhorov metric is well-suited for small persistence diagrams.
Large scale computations can be improved by the technique of entropic regularization from the theory of optimal transport \cite{lacombe_large_2018}.
As the classical Prokhorov metric admits an optimal transport characterization, our discrete variant might be tractable using similar techniques.

A major aspect of the importance of the Bottleneck distance is its algebraic formulation in terms of interleavings.
This theory generalizes to incorporate the family of Prokhorov metrics.
An algebraic formulation would also provide a perspective on generalizations to multiparameter persistence.

Our results in section \ref{subsec:MetricProperties} establish that our construction yields a Polish space.
This makes it suitable for statistical inference.
In a similar vein, one can also investigate bottleneck profiles persistence diagrams arising from random geometric complexes.
What kind of limit objects appear in this context?
Can they be used to perform statistical testing?

Morally, stability theorems should involve related metrics on the input point cloud and on the persistence diagram side. 
This motivates to investigate Prokhorov-type distances for point clouds in $\R^n$.
Such distances might be useful throughout data science.

\subsection*{Code availability}\label{sec:implementation}
We provide an implementation as a part of a custom gudhi fork at \url{https://github.com/nihell/persistence-prokhorov}.
It is a modification of the GUDHI implementation of the Bottleneck distance \cite{gudhi:BottleneckDistance}.
Let us first illustrate how to use it before we come to runtime considerations.
The algorithm is implemented in C++ and comes with Python bindings.

\begin{verbatim}
prokhorov_distance(diagram_1: numpy.ndarray[numpy.float64],
                   diagram_2: numpy.ndarray[numpy.float64],
                   coef: numpy.ndarray[numpy.float64]) -> float  
\end{verbatim}

It asks for three inputs: \verb|diagram_1|, \verb|diagram_2| and \verb|coef|.
The two diagrams need to be presented as 2D numpy arrays.
The third parameter is a 1D numpy array representing the coefficients of a polynomial to be used as $f$.
Note that the zeroth entry needs to be zero in order to obtain a metric, compare Lemma \ref{lem:GenPrDefininite}.
However, setting the polynomial to be a constant integer one recovers the values of $D_{X,Y}$, which is a feature.
In the technical details, our approach follows \cite{gudhi:BottleneckDistance}, which follows \cite{Kerber2017Geometry}.

In addition, we also add the Prokhorov metric to \cite{gudhi:PersistenceRepresentations}, allowing for parallel computations of distance matrices and integration with \verb|sklearn|.

\subsubsection*{Ackknowledgement}
This work was in part supported by he Centre for Topological Data Analysis, EPSRC grant EP/R018472/1, and by the Dioscuri program initiated by the Max Planck Society, jointly managed with the National Science Centre (Poland), and mutually funded by the Polish Ministry of Science and Higher Education and the German Federal Ministry of Education and Research.
We thank Gesine Reinert and Sayan Mukherjee for valuable discussions and Davide Gurnari for providing help with the experiments.
Finally, we thank the anonymous referees for their helpful suggestions.
\printbibliography

\end{document}